\newcommand*{\inst}[1]{}
\newcommand*{\authorrunning}[1]{}
\newcommand*{\institute}[1]{}
\newcommand*{\keywords}[1]{Keywords: #1}
\date{}
    \def\Jvar{}
    \newcommand{\J}[2][]{#2}
    \newcommand{\JO}[2][]{#2}
    \newcommand{\JProof}[2]{\begin{proof}#2\end{proof}}
    \newenvironment{Jfigure}[1][]{\begin{figure}[#1]}{\end{figure}}
    \newenvironment{Jwrapfigure}[3][]{\begin{wrapfigure}[#1]{#2}{#3}}{\end{wrapfigure}}
    \def\Jvar{}
    \newcommand{\J}[2][]{} %
    \newcommand{\JO}[2][]{#1}
    \newcommand{\JProof}[2]{\J{\begin{proof}[of {#1}]#2\end{proof}}}
\def\adl@drawiv#1#2#3{%
        \hskip.5\tabcolsep
        \xleaders#3{#2.5\@tempdimb #1{1}#2.5\@tempdimb}%
                #2\z@ plus1fil minus1fil\relax
        \hskip.5\tabcolsep}
\newcommand{\cdashlinelr}[1]{%
  \noalign{\vskip\aboverulesep
           \global\let\@dashdrawstore\adl@draw
           \global\let\adl@draw\adl@drawiv}
  \cdashline{#1}
  \noalign{\global\let\adl@draw\@dashdrawstore
           \vskip\belowrulesep}}
\definecolor{kgnote}{rgb}{1.0000,0.0000,0.0000}
\newcommand*{\teigi}[1]{{\emph{#1}}}
\DeclareMathOperator{\fnDeletePrefixRun}{rmPreRun}
\DeclareMathOperator{\fnDeleteSuffixRun}{rmSufRun}
\DeclareMathOperator{\fnPopcount}{popcount}
\DeclareMathOperator{\fnFind}{find}
\DeclareMathOperator{\fnFindBigram}{findBigram}
\DeclareMathOperator{\fnMSB}{msb}
\newcommand*{\fnAND}{\mathbin{\&}}
\newcommand*{\fnOR}{\mathbin{|}}
\newcommand*{\fnXOR}{\mathbin{\otimes}}
\DeclareMathOperator{\sort}{sort}
\DeclareMathOperator{\scan}{scan}
\begin{document}
\JO[\pagestyle{fancyplain}]{}%

\JO[%
\title{Re-Pair in Small Space$^\ast$%
\footnote{$^\ast$The full version of this paper is available at \texttt{http://arxiv.org/abs/1908.04933} with a practical evaluation and improved running times in the word-packed model or for CRCW machines.}
}]{\title{Re-Pair in Small Space}}

\authorrunning{D. K\"oppl et al.}
\author{%
  Dominik~K\"oppl\inst{1} \and
  Tomohiro I\inst{2} \and
  Isamu Furuya\inst{3} \and
Yoshimasa Takabatake\inst{2} \and
Kensuke Sakai\inst{2} \and
Keisuke Goto\inst{4} \and
\and
\JO[%
{\small\begin{minipage}{\linewidth}\begin{center}
    \inst{1}{Kyushu University/JSPS, Japan,}
  \email{dominik.koeppl@inf.kyushu-u.ac.jp}
  \\
  \inst{2}Kyushu Institute of Technology, Japan,\\
  \{\texttt{tomohiro@}, \texttt{takabatake@}, \texttt{k\_sakai@donald.}\}\texttt{ai.kyutech.ac.jp}
  \\
  \inst{3}Graduate School of IST, Hokkaido University, Japan,
  \email{furuya@ist.hokudai.ac.jp}
  \\
  \inst{4}Fujitsu Laboratories Ltd., Kawasaki, Japan,
\email{goto.keisuke@fujitsu.com}
    \end{center}\end{minipage}}
]{}%
}
\institute{%
   {Kyushu University, Japan Society for Promotion of Science}\\
  \email{dominik.koeppl@inf.kyushu-u.ac.jp},
   \and
   Kyushu Institute of Technology, Japan\\
  \email{tomohiro@ai.kyutech.ac.jp}, \email{takabatake@ai.kyutech.ac.jp}, \email{k\_sakai@donald.ai.kyutech.ac.jp}
  \and
  Graduate School of IST, Hokkaido University, Japan\\
  \email{furuya@ist.hokudai.ac.jp}
\and
Fujitsu Laboratories Ltd., Kawasaki, Japan \\
\email{goto.keisuke@fujitsu.com}
  }

\newcommand*{\Acknowledgments}{%
This work is funded by the JSPS KAKENHI Grant Numbers JP18F18120 (Dominik~K\"oppl), 19K20213 (Tomohiro~I) and 18K18111 (Yoshimasa~Takabatake), and the JST CREST Grant Number JPMJCR1402 including AIP challenge program (Keisuke~Goto).
}

\maketitle
\begin{abstract}
   Re-Pair is a grammar compression scheme with favorably good compression rates. 
The computation of Re-Pair comes with the cost of maintaining large frequency tables, which 
makes it hard to compute Re-Pair on large scale data sets.
As a solution for this problem we present,
given a text of length~$n$ whose characters are drawn from an integer alphabet, 
an \JO[\Oh{n^2}]{$\Oh{n^2}\cap \Oh{n^2 \lg \log_\tau n  \lg \lg \lg n / \log_\tau n}$} time algorithm computing Re-Pair in $n \upgauss{\lg \max(n, \tau)}$ bits of working space including the text space, 
where $\tau$ is the number of terminals and non-terminals.
The algorithm works in the restore model, supporting the recovery of the original input in the time for the Re-Pair computation with \Oh{\lg n} additional bits of working space.
\JO{We give variants of our solution working in parallel or in the external memory model.}
\end{abstract}

\keywords{Grammar Compression, Re-Pair, Computation in Small Space}

\section{Introduction}
Re-Pair~\cite{larsson99repair} is a grammar deriving a single string. 
It is computed by replacing the most frequent bigram in this string with a new non-terminal, recursing until no bigram occurs more than once.
Despite this simple-looking description, both the merits and the computational complexity of Re-Pair are intriguing.
As a matter of fact, Re-Pair is currently one of the most well-understood grammar schemes. 

Besides the seminal work of \citet{larsson99repair}, there are a couple of articles devoted to the compression aspects of Re-Pair:
Given a text~$T$ of length~$n$ whose characters are drawn from an integer alphabet of size~$\sigma$,
the output of Re-Pair applied to~$T$ is at most $2n H_k(T) + \oh{n \lg \sigma}$ bits with $k = \oh{\log_\sigma n}$ when represented naively as a list of character pairs~\cite{navarro08repair},
where $H_k$ denotes the empirical entropy of the $k$-th order.
Using the encoding of~\citet{kieffer00code}, 
\citet{ochoa19repair} could improve the output size to at most $n H_k(T) + \oh{n \lg \sigma}$ bits.
Other encodings were recently studied by \citet{ganczorz19entropy}.
Since Re-Pair is a so-called \emph{irreducible} grammar,
its grammar size, i.e., the sum of the symbols on the right hand of all rules, 
is upper bounded by \Oh{n / \log_\sigma n}~\cite[Lemma~2]{kieffer00code}, which matches 
the information-theoretic lower bound on the size of a grammar for a string of length~$n$.
Comparing this size with the size of the smallest grammar, its approximation ratio has
\Oh{(n / \lg n)^{2/3}} as an upper bound~\cite{charikar05grammar} and \Om{\lg n / \lg \lg n} as a lower bound~\cite{2019arXiv190806428B}.

On the practical side, \citet{yoshida13repair} presented an efficient fixed-length code for compressing the Re-Pair grammar.
Although conceived as a grammar for compressing texts, 
Re-Pair has been successfully applied for compressing
trees~\cite{lohrey13repair}, 
matrices~\cite{tabei16repair},
or images~\cite{2019arXiv190110744D}.

For different settings or for better compression rates, there is a great interest in modifications to Re-Pair.
\citet[Sect.~G]{charikar05grammar} give an easy variation to improve the size of the grammar.
\citet{sekine14repair} provide an adaptive variant whose algorithm divides the input into blocks, 
and processes each block based on the rules obtained from the grammars of its preceding blocks.
Subsequently, \citet{masaki16repair} gave an \emph{online} algorithm producing a grammar mimicking Re-Pair.
\citet{ganczorz17repair} modified the Re-Pair grammar by disfavoring the replacement of bigrams that cross Lempel-Ziv-77 (LZ77)~\cite{ziv77lz} factorization borders, 
which allowed the authors to achieve practically smaller grammar sizes.
Recently, \citet{furuya19repair} presented a variant, called \emph{MR-Re-Pair}, in which a most frequent maximal repeat is replaced instead of a most frequent bigram.

\subsection{Related Work}
Although Re-Pair is a well received grammar, there is not much literature found on how to compute Re-Pair efficiently.
\J{%
In this article, we focus on the problem to compute the grammar with an algorithm working in text space, 
forming a bridge between the domain of in-place string algorithms and the domain of Re-Pair computing algorithms.
We briefly review some prominent achievements in both domains:

\block{In-Place String Algorithms}
For the LZ77 factorization,
\citet{karkkainen13lz77} present an algorithm computing this factorization with \Oh{n/d} words on top of the input space in \Oh{dn} time for a variable $d \ge 1$, achieving \Oh{1} words with \Oh{n^2} time.
For the suffix sorting problem,
\citet{2017arXiv170301009G} gave an algorithm to compute the suffix array with \Oh{\lg n} bits on top of the output in \Oh{n} time if each character of the alphabet is present in the text. 
This condition got improved to alphabet sizes of at most~$n$ by \citet{li18sa}.
Finally, \citet{crochemore15bwt} showed how to transform a text into its Burrows-Wheeler transform by using \Oh{\lg n} of additional bits.
Due to \citet{louza17bwt}, this algorithm got extended to compute simultaneously the LCP array with \Oh{\lg n} bits of additional working space.

\block{Re-Pair Computation}
}%
Re-Pair is a grammar proposed by~\citet{larsson99repair}, who gave an algorithm computing it in expected linear time with $5n + 4\sigma^2 + 4\sigma' + \sqrt{n}$ words of working space, where $\sigma'$ is the number of non-terminals (produced by Re-Pair).
This space requirement got improved by \citet{bille17repair}, who presented a linear time algorithm taking
$(1+\epsilon)n + \sqrt{n}$ words 
on top of the rewriteable text space
for a constant~$\epsilon$ with $0 < \epsilon \le 1$.
Subsequently, they improved their algorithm in~\cite{2017arXiv170408558B} to include the text space within the $(1+\epsilon)n + \sqrt{n}$ words of working space.
However, they assume that the alphabet size~$\sigma$ is constant and $\upgauss{\lg \sigma} \le w/2$, where $w$ is the machine word size.
They also provide a solution for $\epsilon = 0$ running in expected linear time.
Recently, \citet{sakai19repair} showed how to convert an arbitrary grammar (representing a text) into the Re-Pair grammar in compressed space, i.e., without decompressing the text. Combined with a grammar compression that can process the text in compressed space in a streaming fashion, this result leads to the first Re-Pair computation in compressed space.

\block{Our Contribution}
In this article, we propose an algorithm that computes the Re-Pair grammar
in \JO[\Oh{n^2}]{$\Oh{n^2} \cap \Oh{n \lg \log_{\tau} n \lg \lg \lg n / \log_{\tau} n}$} time \JO[(cf.\ \cref{thmGoal})]{(cf.\ \cref{thmGoal} and \cref{thmWordPack})}
with $\max( (n/c) \lg n, n \upgauss{\lg \tau}) + \Oh{\lg n}$ bits of working space including the text space, where $\tau$ is the number of terminals and non-terminals.
Given that the characters of the text are drawn from a large integer alphabet with size $\sigma = \Om{n}$,
the algorithm works in-place.
This is the first non-trivial in-place algorithm, as a trivial approach on a text~$T$ of length~$n$ would compute the most frequent bigram in \Ot{n^2} time by
computing the frequency of each bigram~$T[i]T[i+1]$ for every integer~$i$ with $1 \le i \le n-1$, keeping only the most frequent bigram in memory.
This sums up to \Oh{n^3} total time, 
\dknote*{An example for this?}{and can be \Ot{n^3} for some texts} 
since there can be \Ot{n} different bigrams considered for replacement by Re-Pair.
To achieve our goal of \Oh{n^2} total time, we first provide a trade-off algorithm (cf. \cref{lemBatchedCount}) finding the $d$ most frequent bigrams in \Oh{n^2 \lg d / d} time for a trade-off parameter~$d$.
We subsequently run this algorithm for increasing values of $d$, 
and show that we need to run it $\Oh{\lg n}$ times, which gives us \Oh{n^2} time if $d$ is increasing sufficiently fast.
\J[%
Our major tools are appropriate text partitioning, elementary scans, and sorting steps, which we visualize in \cref{secStepTurn} by an example.
]{%
Our major tools are appropriate text partitioning, elementary scans, and sorting steps, which we visualize in \cref{secStepTurn} by an example, and practically evaluate in \cref{secImplementation}.
When $\tau = \oh{n}$, a different approach using word-packing and bit-parallel techniques becomes attractive,
leading to an \Oh{n \lg \log_{\tau} n \lg \lg \lg n / \log_{\tau} n} time algorithm, which we explain in \cref{secWordPack}.
  Our algorithm can be parallelized (\cref{secParallel}), used in external memory (\cref{secEM}), or adapted
to compute the MR-Re-Pair grammar in small space (\cref{secMaxRepeat}).
Finally, in \cref{secHeuristics} we study several heuristics that make the algorithm faster on specific texts.
}%

\subsection{Preliminaries}

We use the word RAM model with a word size of $\Om{\lg n}$ for an integer~$n \ge 1$.
\J{We work in the restore model~\cite{chan18restore}, in which
algorithms are allowed to overwrite the input, as long as they can
restore the input to its original form.}%

\block{Strings}
Let $T$ be a text of length~$n$ whose characters are drawn from an integer alphabet~$\Sigma$ of size $\sigma = n^{\Oh{1}}$.
A bigram is an element of $\Sigma^2$.
The \teigi{frequency} of a bigram~$B$ in $T$ is the number of \emph{non-overlapping} occurrences of~$B$ in $T$,
which is at most $\abs{T}/2$.

\block{Re-Pair}
We reformulate the recursive description in the introduction by dividing a Re-Pair construction algorithm into turns. 
Stipulating that $T_i$ is the text after the $i$-th turn with $i \ge 1$ and $T_0 := T \in \Sigma_0^+$ with $\Sigma_0 := \Sigma$,
Re-Pair replaces one of the most frequent bigrams (ties are broken arbitrarily) in $T_{i-1}$ with a non-terminal in the $i$-th turn.
Given this bigram is $\texttt{bc} \in \Sigma^2_{i-1}$,
Re-Pair replaces all occurrences of \texttt{bc} with a new non-terminal~$X_{i}$ in $T_{i-1}$, and
sets $\Sigma_{i} := \Sigma_{i-1} \cup \{X_{i}\}$ with $\sigma_i := |\Sigma_i|$ to produce $T_{i} \in \Sigma_i^+$.
Since $\abs{T_i} \le \abs{T_{i-1}}-2$, Re-Pair terminates after $m < n/2$ turns such that $T_m \in \Sigma_m^+$ contains no bigram occurring more than once.

\JO[\section{Algorithm}]{\section{Sequential Algorithm}} \label{secSequential}
A major task for producing the Re-Pair grammar is to count the frequencies of the most frequent bigrams.
Our work horse for this task are frequency tables. 
A \teigi{frequency table} in $T_i$ of length~$f$ stores pairs of the form $(\texttt{bc},x)$, where \texttt{bc} is a bigram and $x$ the frequency of \texttt{bc} in $T_i$.
It uses $f \upgauss{\lg (\sigma_i^2 n_i/2)}$ bits of space since an entry stores a bigram consisting of two characters from $\Sigma_i$ and its respective frequency, which can be at most $n_i/2$. %
Throughout this paper, we use an elementary in-place sorting algorithm like heapsort:
\begin{lemma}[\cite{william64heapsort}]
An array of length~$n$ can be sorted in-place in \Oh{n \lg n} time.
\end{lemma}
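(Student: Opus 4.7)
The plan is to implement heapsort in-place on the array~$A$. A binary max-heap can be embedded implicitly in $A$ by treating position~$i$ as the parent of positions $2i$ and $2i+1$, so all heap operations reduce to index arithmetic and swaps of array entries; no auxiliary nodes or pointers are required. The algorithm proceeds in two phases: first convert $A$ into a max-heap, then repeatedly extract the maximum into the tail of~$A$.

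For the first phase, I would invoke the standard sift-down procedure at indices $\lfloor n/2 \rfloor, \lfloor n/2 \rfloor - 1, \ldots, 1$. A single sift-down starting at height~$h$ costs \Oh{h} time; since at most $\lceil n / 2^{h+1} \rceil$ nodes of the implicit binary tree lie at height~$h$, summing over all heights telescopes to $\sum_{h \ge 0} h \lceil n / 2^{h+1} \rceil = \Oh{n}$ total work. This step uses only \Oh{1} additional words for loop indices and swap scratch.

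For the second phase, I would iterate $i = n, n-1, \ldots, 2$, at each step swapping $A[1]$ with $A[i]$, shrinking the active heap to size~$i-1$, and sifting the new root down. Each sift-down costs at most \Oh{\lg n}, so the phase runs in \Oh{n \lg n} time. The extracted maxima accumulate at the tail of $A$ in decreasing order of extraction, which leaves $A$ sorted in ascending order at termination. Summing both phases yields the claimed \Oh{n \lg n} bound, and all operations act directly on~$A$ using \Oh{1} words of scratch, so the algorithm is in-place. There is no genuine obstacle: the only non-trivial ingredient is the invariant that at the start of each extraction iteration, $A[1..i-1]$ forms a valid max-heap while $A[i..n]$ already holds the $n-i+1$ largest entries in sorted order, which is preserved by the swap-and-sift-down step.
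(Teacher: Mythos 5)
Your proposal is correct and matches the approach the paper intends: the lemma is stated with a citation to Williams' heapsort and no proof is given in the paper, and your reconstruction of the standard two-phase heapsort argument (bottom-up heapification in \Oh{n} time followed by $n-1$ extract-max operations at \Oh{\lg n} each, all via index arithmetic and swaps within the array) is exactly the intended justification. The only cosmetic remark is that the linear-time bottom-up construction is usually credited to Floyd rather than Williams, but even the naive \Oh{n \lg n} construction suffices for the claimed bound.
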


\JO[\vspace{-1em}]{}
\subsection{Trade-Off Computation}
By embracing the frequency tables, we present a solution with a trade-off parameter:

\begin{lemma}\label{lemBatchedCount}
   Given an integer~$d$ with $d \ge 1$,
   we can compute the frequencies of the $d$ most frequent bigrams in a text of length $n$ whose characters are drawn from an alphabet of size~$\sigma$ 
   in \Oh{\max(n,d) n  \lg d / d} time using $2d \upgauss{\lg(\sigma^2 n/2)} + \Oh{\lg n}$ bits.
\end{lemma}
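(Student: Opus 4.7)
The plan is to process $T$ in $\lceil(n-1)/d\rceil$ consecutive blocks of $d$ positions while maintaining a frequency table $F$ of size $d$ that holds the top-$d$ bigrams encountered so far, together with their exact global non-overlapping frequencies. A second frequency table $C$ of the same size gathers fresh candidates from the current block; together the two tables exactly meet the $2d\upgauss{\lg(\sigma^2 n/2)}$-bit budget, while a handful of loop counters and scan pointers account for the $\Oh{\lg n}$ term.

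For a block of positions $[l,r]$, I would first write the bigrams $T[i]T[i+1]$ for $i\in[l,r]$ into $C$, heap-sort $C$ in place in $\Oh{d\lg d}$ time, and collapse duplicates, leaving $C$ as a sorted list of at most $d$ distinct candidate bigrams with zeroed counters. One left-to-right pass over $T$ then binary-searches each $T[i]T[i+1]$ in $C$ and, on a hit, increments its counter, costing $\Oh{n\lg d}$ time and giving, for each candidate, the number of (possibly overlapping) occurrences in $T$. The main obstacle is to convert these into non-overlapping counts without spending a $\lg n$-bit ``last-match'' field per entry, which would blow the budget. The observation that unlocks this is that two occurrences of the same bigram can overlap only when the bigram has the form $xx$, because $T[i]T[i+1]=T[i+1]T[i+2]$ forces $T[i]=T[i+1]$. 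Hence a second linear scan over $T$ that detects maximal runs of identical characters suffices: for each run of a character $x$ of length $\ell\ge 2$ I binary-search $xx$ in $C$ and, on a hit, subtract $\lceil\ell/2\rceil-1$ from its counter, which is exactly the gap between the overlapping count $\ell-1$ and the non-overlapping count $\lfloor\ell/2\rfloor$. This second pass carries only $\Oh{\lg n}$ bits of state (the start of the current run) and costs $\Oh{n\lg d}$ time.

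After the correction, $C$ stores exact global non-overlapping frequencies for its candidates. Merging $C$ into $F$ by walking the two sorted lists, discarding duplicates (which necessarily agree on their global counts since both come from complete scans of $T$), and selecting the $d$ largest counters via another heap-sort takes $\Oh{d\lg d}$ time and leaves $F$ as the correct top-$d$ over the positions processed so far. The per-block cost is therefore $\Oh{n\lg d+d\lg d}$; summing over $\lceil(n-1)/d\rceil$ blocks yields $\Oh{n^2\lg d/d+n\lg d}=\Oh{\max(n,d)\,n\lg d/d}$, collapsing to $\Oh{n\lg d}$ in the single-block regime $d>n$. Correctness is immediate because every position of $T$ lies in some block, so every bigram occurring in $T$ becomes a candidate at some point and is assigned its exact global frequency; in particular the $d$ most frequent bigrams survive in $F$ at the end.
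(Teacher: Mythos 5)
Your proposal is correct and follows essentially the same route as the paper's proof: partition the positions into $\Oh{n/d}$ blocks, collect at most $d$ candidate bigrams per block into a second sorted table, count their frequencies by a full scan with binary search in $\Oh{n \lg d}$ time, and merge into the running top-$d$ table, giving $\Oh{\max(n,d)\, n \lg d / d}$ time within $2d\upgauss{\lg(\sigma^2 n/2)} + \Oh{\lg n}$ bits. Your explicit run-based correction for non-overlapping occurrences of bigrams of the form $xx$ is a valid way to handle a detail the paper leaves implicit in its left-to-right counting scan.
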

\begin{proof}
   Our idea is to partition the set of all bigrams appearing in $T$ into $\upgauss{n / d}$ subsets, 
   compute the frequencies for each subset, and finally merge these frequencies.
   In detail, we partition the text~$T = S_1 \cdots S_{\upgauss{n / d}}$ into $\upgauss{n / d}$ substrings such that each substring has length~$d$ (the last one has a length of at most $d$).
   Subsequently, we extend $S_j$ to the left (only if $j > 1$) and to the right (only if $j < \upgauss{n / d}$) such that
$S_j$ and $S_{j+1}$ overlap by one text position, for $1 \le j < \upgauss{n / d}$.
By doing so, we take the bigram on the border of two adjacent substrings~$S_j$ and $S_{j+1}$ for each~$j < \upgauss{n / d}$ into account.
   Next, we create two frequency tables~$F$ and~$F'$, each of length $d$ for storing the frequencies of $d$ bigrams.
With $F$ and $F'$, we process each of the $n/d$ substrings~$S_j$ as follows:
Let us fix an integer~$j$ with $1 \le j \le \upgauss{n/d}$. 
We first put all bigrams of $S_j$ into $F'$ in \emph{lexicographic} order.
We can perform this within the space of $F'$ in \Oh{d \lg d} time since there are at most $d$ different bigrams in $S_j$.
We compute the frequencies of all these bigrams in the \emph{complete} text~$T$ in $\Oh{n \lg d}$ time
by scanning the text from left to right while locating a bigram in $F'$ in \Oh{\lg d} time with a binary search.
Subsequently, we interpret $F$ and $F'$ as one large frequency table, sort it with respect to the frequencies while discarding duplicates
such that $F$ stores the $d$ most frequent bigrams in~$T[1..jd]$.
This sorting step can be done in \Oh{d \lg d} time.
Finally, we clear $F'$ and are done with $S_j$.
After the final merge step, we obtain the $d$ most frequent bigrams of~$T$ stored in~$F$.

Since each of the \Oh{n/d} merge steps takes \Oh{d \lg d + n \lg d} time, we need
\(
\Oh{\max(d,n) \cdot (n \lg d)/d}
   \)
   time.
   For $d \ge n$, we can build a large frequency table and perform one scan to count the frequencies of all bigrams in $T$.
   This scan and the final sorting with respect to the counted frequencies can be done in \Oh{n \lg n} time.
\end{proof}

\subsection{Algorithmic Ideas}

With \cref{lemBatchedCount}, we can compute $T_m$ in \Oh{m n^2 \lg d / d} time with additional $2d$ $\upgauss{\lg(\sigma_m^2 n/2)}$~bits of working space on top of the text for a parameter~$d$ with $1 \le d \le n$.
In the following, we present an \Oh{n^2} time algorithm that needs $\max( (n/c) \lg n,$ $n \upgauss{\lg \sigma_m}) + \Oh{\lg n}$ bits of working space, where the text space is included as a rewriteable part in the working space and $c \ge 1$ is a constant.
In this model, we assume that we can enlarge the text~$T_i$ from $n_i \upgauss{\lg \sigma_i}$ bits to $n_i \upgauss{\lg \sigma_{i+1}}$ bits without additional extra memory.
Our main idea is to store a growing frequency table using the space freed up by replacing bigrams with non-terminals.
In detail, we maintain a frequency table $F$ in $T_i$ of length $f_k$ for a growing variable~$f_k$,
which is set to $f_0 := \Oh{1}$ in the beginning.
The table~$F$ takes $f_k \upgauss{\lg (\sigma_i^2 n/2)}$ bits, which is $\Oh{\lg (\sigma^2 n)} = \Oh{\lg n}$ bits for $k = 0$.
When we want to query it for a most frequent bigram, we linearly scan~$F$ in $\Oh{f_k} = \Oh{n}$ time,
which is not a problem since (a) the number of queries is $m \le n$, and (b) we aim for \Oh{n^2} overall running time.
A consequence is that there is no need to sort the bigrams in $F$ according to their frequencies, which simplifies the following discussion.

\paragraph{Frequency Table~$F$.}
With \cref{lemBatchedCount}, we can compute $F$ in $\bigOh(n \max(n, f_k)$ $\lg f_k / f_k)$ time.
Instead of recomputing $F$ for every turn~$i$, we want to recompute it only when it no longer stores a most frequent bigram.
However, it is ad-hoc not clear when this happens as replacing a most frequent bigram during a turn (a) removes this entry in~$F$
and (b) can reduce the frequencies of other bigrams in~$F$, making them possibly less frequent than other bigrams not tracked by~$F$.
Hence, the variable~$i$ for the $i$-th turn (creating the $i$-th non-terminal) and the variable~$k$ for recomputing the frequency table~$F$ the $(k+1)$-st time are loosely connected. 
We group together all turns with the same $f_k$ and call this group the \teigi{$k$-th round} of the algorithm.
At the beginning of each round, we enlarge~$f_k$ and create a new~$F$ with a capacity for $f_k$~bigrams.
Since a recomputation of~$F$ takes much time,
we want to end a round only if~$F$ is no longer useful, i.e., when we no longer can guarantee that $F$ stores a most frequent bigram.
To achieve our claimed time bounds, we want to assign all $m$ turns to $\Oh{\lg n}$ different rounds,
which can only be done if $f_k$ grows sufficiently fast.

\paragraph{Algorithm Outline.}
Given we are at the beginning of the $k$-th round and the $i$-th turn, 
we compute the frequency table $F$ storing~$f_k$ bigrams, and keep additionally the lowest frequency of~$F$ as a threshold~$t$, which is treated as a constant during this round.
During the computation of the $i$-th turn, we replace the most frequent bigram (say, $\texttt{bc} \in \Sigma_i^2$) in the text~$T_i$ with a non-terminal~$X_{i+1}$ to produce~$T_{i+1}$.
Thereafter, we remove \texttt{bc} from $F$ and update those frequencies in~$F$ which got decreased by the replacement of \texttt{bc} with $X_{i+1}$, and
add each bigram containing the new character~$X_{i+1}$ into~$F$ if its frequency is at least~$t$.
Whenever a frequency in~$F$ drops below~$t$, we discard it. 
If $F$ becomes empty, we move to the $(k+1)$-st round, and create a new $F$ for storing $f_{k+1}$ frequencies.
Otherwise ($F$ still stores an entry), we can be sure that~$F$ stores a most frequent bigram.
In both cases, we recurse with the $(i+1)$-st turn by selecting the bigram with the highest frequency stored in~$F$.
We describe in the following how we update of~$F$ and how large $f_{k+1}$ can be at least.

\subsection{Algorithmic Details}
Suppose that we are in the $k$-th round and in the $i$-th turn.
Let~$t$ be the lowest frequency in~$F$ computed at the beginning of the $k$-th round.
We keep~$t$ as a constant threshold for the invariant that all frequencies in~$F$ are at least~$t$ during the $k$-th round.
With this threshold we can assure in the following that $F$ is either empty or stores a most frequent bigram.\JO{\\}
Now suppose that the most frequent bigram of~$T_i$ is $\texttt{bc} \in \Sigma_i^2$, which is stored in $F$.
To produce $T_{i+1}$ (and hence advancing to the $(i+1)$-st turn), 
we enlarge the space of $T_i$ from $n_i \upgauss{\lg \sigma_i}$ to $n_i \upgauss {\lg \sigma_{i+1}}$, and replace all occurrences of \texttt{bc} in $T_i$ with a new non-terminal $X_{i+1}$.
Subsequently, we would like to take the next bigram of~$F$.
For that, however, we need to update the stored frequencies in $F$.
To see this necessity, suppose that there is an occurrence of \texttt{abcd} with two characters $\texttt{a}, \texttt{d} \in \Sigma_i$ in $T_i$. 
By replacing $\texttt{bc}$ with $X_{i+1}$,
\begin{enumerate}[label=(\alph*)]
  \item the frequencies of \texttt{ab} and \texttt{cd} decrease by one\footnote{For the border case \texttt{a} = \texttt{b} = \texttt{c} (resp.\ \texttt{b} = \texttt{c} = \texttt{d}), there is no need to decrement the frequency of \texttt{ab} (resp.\ \texttt{cd}).}, and
  \item the frequencies of $\texttt{a}X_{i+1}$ and $X_{i+1}\texttt{d}$ increase by one.
\end{enumerate}
\paragraph{Updating~$F$}
We can take care of the former changes~(a) by decreasing the respective bigram in $F$ (in case that it is present).
If the frequency of this bigram drops below the threshold~$t$,
we remove it from~$F$ as there may be bigrams with a higher frequency that are not present in $F$.
To cope with the latter changes~(b), we track the characters adjacent to $X_{i+1}$ after the replacement, count their numbers, and add 
their respective bigrams to $F$ if their frequencies are sufficiently high.
In detail, suppose that we have substituted \texttt{bc} with $X_{i+1}$ exactly $h$ times.
Consequently, with the new text~$T_{i+1}$ we have additionally $h \lg \sigma_{i+1}$ bits of free space\footnote{The free space is consecutive after shifting all characters to the left.}, which we call~$D$ in the following.
Subsequently, we scan the text and put the characters of $\Sigma_{i+1}$ appearing to the left of each of the $h$ occurrences of $X_{i+1}$ into~$D$.
After sorting the characters in $D$ lexicographically, we can count the frequency of $\texttt{a}X_{i+1}$ for each character $\texttt{a} \in \Sigma_{i+1}$ preceding an occurrence of $X_{i+1}$ in the text~$T_{i+1}$ by scanning~$D$ linearly.
If the obtained frequency of such a bigram $\texttt{a}X_{i+1}$ is at least as high as the threshold~$t$,
we insert $\texttt{a}X_{i+1}$ into $F$,
and subsequently discard a bigram with the currently lowest frequency in~$F$ if the size of $F$ has become $f_k+1$.
In case that we visit a run of $X_{i+1}$'s during the creation of~$D$, we must take care of not counting the overlapping occurrences of $X_{i+1} X_{i+1}$.
Finally, we can count analogously the occurrences of $X_{i+1}\texttt{d}$ for all characters $\texttt{d} \in \Sigma_i$ succeeding an occurrence of $X_{i+1}$.

\paragraph{Capacity of~$F$}
After the above procedure we have updated the frequencies of $F$.
When $F$ becomes empty, we end the $k$-th round and continue with the ($k+1$)-st round by creating a new frequency table~$F$ with capacity~$f_{k+1}$.
In what follows, we (a) analyze in detail when $F$ becomes empty (as this determines the sizes~$f_k$ and~$f_{k+1}$),
and (b) show that we can compensate the number of discarded bigrams with an enlargement of $F$'s capacity from $f_k$~bigrams to $f_{k+1}$~bigrams
for the sake of our aimed total running time:
If the frequency of \texttt{bc} in $T_i$ is $x$, then we can reduce at most $2x$ frequencies of other bigrams.
Since a bigram must occur at least twice in~$T_i$ to be present in~$F$,
the frequency of \texttt{bc} has to be at least $\max(2, (f_k-1)/2)$ for discarding all bigrams of~$F$, and
each replacement of \texttt{bc} with $X_{i+1}$ frees up $\upgauss{\lg\sigma_{i+1}}$ bits of the text.

Suppose that we have enough space available for storing the frequencies of $\alpha f_k$ bigrams, 
where $\alpha$ is a constant (depending on $\sigma_i$ and $n_i$) such that $F$ and the working space of \cref{lemBatchedCount} with $d = f_k$ can be stored within this space.
Let $\delta := \lg (\sigma^2_{i+1} n_i/2)$ be the number of bits needed to store one entry in $F$,
and let $\beta := \min(\delta / \lg \sigma_{i+1}, c \delta/\lg n)$ 
be the minimum number of characters that need to be freed to store one frequency in this space.
To understand the value of $\beta$, we look at the arguments of the minimum function in the definition of~$\beta$ and simultaneously at the maximum function 
in our aimed working space of $\max(n \upgauss{\lg \sigma_m},  (n/c) \lg n) + \Oh{\lg n}$ bits (cf.~\cref{thmGoal}):
\begin{itemize}
  \item The first item in this maximum function allows us to spend $\lg \sigma_{i+1}$ bits for each freed character such that we 
    obtain space for one additional entry in~$F$ after freeing $\delta / \lg \sigma_{i+1}$ characters.
  \item The second item allows us to use $\lg n$ additional bits after freeing up $c$ characters.%
\footnote{This additional treatment helps us to let $f_k$ grow sufficiently fast in the first steps to save our \Oh{n^2} time bound, as for sufficiently small alphabets and large text sizes, $\lg (\sigma^2 n/2) / \lg \sigma = \Oh{\lg n}$, which means that we might run the first $\Oh{\lg n}$ turns with $f_k = \Oh{1}$, and therefore already spend \Oh{n^2 \lg n} time.}
    Hence, after freeing up $c \delta/\lg n$ characters, we have space to store one additional entry in~$F$.
\end{itemize}
\begin{align*}
  \text{With $\beta$, we have~}
   \alpha f_{k+1} 
   &= \alpha f_k + \max(2/\beta, (f_k-1)/(2\beta)) \\
   &= \alpha f_k \max(1  + 2/(\alpha \beta f_k) , 1 + 1/(2\alpha\beta) - 1/(2 \alpha\beta f_k))
 \\&\ge \alpha f_k (1 + 2/(5\alpha\beta)) =: \gamma_i \alpha f_k \text{~with~} \gamma_i := 1 + 2/(5\alpha\beta),
   \end{align*}
where we used the equivalence $1  + 2/(\alpha \beta f_k) = 1 + 1/(2\alpha\beta) - 1/(2 \alpha\beta f_k) \Leftrightarrow 5 = f_k$
to estimate the two arguments of the maximum function.

Since we let $f_k$ grow by a factor of at least $\gamma := \min_{1\le i \le n} \gamma_i > 1$ for each recomputation of $F$, 
$f_k = \Om{\gamma^k}$, and therefore $f_{k} = \Ot{n}$ after $k = \Oh{\lg n}$ steps.
Consequently, after reaching $k = \Oh{\lg n}$, we can iterate the above procedure a constant number of times to compute the non-terminals of the remaining bigrams occurring at least twice.

\paragraph{Time Analysis}
On the total picture, we compute $F$ \Oh{\lg n} times with \cref{lemBatchedCount}.
For the $k$-th time, we run the algorithm of \cref{lemBatchedCount} with $d = f_k$ 
on a text of length at most $n-f_k$ in \Oh{n(n-f_k) \cdot \lg f_k / f_k} time with $f_k \le n$.
Summing this up, we yield
\begin{equation}\label{eqTotalTime}
   \OhS{\sum_{k=0}^{\Oh{\lg n}} \frac{n-f_k}{f_k} n \lg f_k} = \OhS{n^2 \sum_k^{\lg n} \frac{k}{\gamma^k}} = \OhS{n^2} \textup{~time in total.}
\end{equation}
In the $i$-th turn, we update $F$ by decreasing the frequencies of the bigrams affected by the substitution of the most frequent bigram \texttt{bc} with $X_{i}$.
For decreasing such a frequency, we look up its respective bigram with a linear scan in $F$, which takes $f_k = \Oh{n}$ time.
However, since this decrease is accompanied with a replacement of an occurrence of \texttt{bc}, 
we obtain \Oh{n^2} total time by charging each text position with $\Oh{n}$ time for a linear search in $F$.
With the same argument, we can bound the total time for sorting the characters in $D$ to \Oh{n^2} overall time:
Since we spend $\Oh{h \lg h}$ time on sorting $h$ characters preceding or succeeding a replaced character, 
and $\Oh{f_k} = \Oh{n}$ time on swapping a sufficiently large new bigram composed of $X_{i+1}$ and a character of $\Sigma_{i+1}$ with
a bigram with the lowest frequency in~$F$,
we charge each text position again with \Oh{n} time.
Putting all time bounds together leads to the main result of this article:

\begin{theorem}\label{thmGoal}
   We can compute Re-Pair on a string of length~$n$ in \Oh{n^2} time 
   with $\max( (n/c) \lg n, n \upgauss{\lg \sigma_m}) + \Oh{\lg n}$ bits of working space including the text space,
   where $c \ge 1$ is a fixed constant, and $\sigma_m$ is the number of terminal and non-terminal symbols.
\end{theorem}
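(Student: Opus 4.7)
The plan is to formalize the round-and-turn framework laid out above and verify both the space and time claims. The algorithm runs rounds $k = 0, 1, 2, \ldots$; at the start of round $k$, I invoke \cref{lemBatchedCount} with parameter $d = f_k$ to build a frequency table $F$ of size $f_k$ on the current text, and record the threshold $t$ equal to the smallest frequency in this freshly built $F$. Each turn within the round extracts a maximum entry of $F$, rewrites the text by replacing that bigram with a fresh non-terminal $X_{i+1}$, and updates $F$ by applying the decrements of type~(a) and the increments of type~(b) described above, discarding every entry that drops below $t$. The round terminates once $F$ becomes empty, at which point the capacity is enlarged to $f_{k+1}$ and a new round is started.

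For correctness of a turn I would rely on the threshold invariant: since frequencies only decrease across turns, any bigram absent from the initial $F$ of round $k$ has frequency at most $t$ throughout the round, so it cannot exceed any entry of $F$. Hence whenever $F$ is non-empty its maximum is a globally most frequent bigram in the current text, which is exactly what Re-Pair requires in the corresponding turn.

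The main obstacle is establishing the geometric growth of $f_k$. Emptying $F$ requires dropping at least $f_k - 1$ entries below $t$, and each replacement of the chosen bigram \texttt{bc} causes at most two such drops, so at least $(f_k - 1)/2$ replacements occur during the round. Each such replacement frees $\upgauss{\lg \sigma_{i+1}}$ bits of text; using the parameter~$\beta$ defined above, this converts to capacity for $\Omega(f_k / \beta)$ additional entries, yielding $f_{k+1} \ge \gamma f_k$ for some constant $\gamma > 1$ independent of~$k$. Consequently the algorithm completes in $\Oh{\lg n}$ rounds, after which a constant number of additional rounds disposes of the remaining bigrams that still occur at least twice.

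With $\Oh{\lg n}$ rounds in hand, the overall cost decomposes cleanly. The recomputations via \cref{lemBatchedCount} contribute
\[
\sum_{k=0}^{\Oh{\lg n}} \OhS{\frac{n(n-f_k)\lg f_k}{f_k}} \;=\; \OhS{n^2 \sum_{k} k / \gamma^k} \;=\; \OhS{n^2}.
\]
The intra-turn maintenance --- linear scans of $F$ per decrement, collecting the characters adjacent to the $h$ newly introduced occurrences of $X_{i+1}$ into the freed region~$D$, sorting $D$ in $\Oh{h \lg h}$ time, and inserting new high-frequency bigrams into $F$ at cost $\Oh{f_k}$ per insertion --- spends $\Oh{n}$ time per replaced text position, and since at most $n$ positions are ever replaced, this amortizes to $\Oh{n^2}$ overall. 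Combining these time bounds with the space accounting already carried out via $\alpha$ and $\beta$ yields the theorem.
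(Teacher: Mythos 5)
Your proposal follows the paper's own proof essentially step for step: the same round/turn decomposition with a fixed threshold $t$ per round, the same counting argument that emptying $F$ forces at least $(f_k-1)/2$ replacements and hence geometric growth $f_{k+1} \ge \gamma f_k$ via the parameter $\beta$, the same bound $\OhS{n^2\sum_k k/\gamma^k} = \Oh{n^2}$ for the recomputations of \cref{lemBatchedCount}, and the same amortization charging \Oh{n} time to each replaced text position for maintaining $F$ and sorting $D$. There is, however, one genuine gap: the theorem's space bound must also accommodate the \emph{output}, i.e., the right-hand sides of the $m$ produced rules, and your proof never says where these $2\upgauss{\lg \sigma_i}$ bits per rule live. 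The paper packs them into an array~$A$ stored in the space freed by the replacements, which costs the equivalent of two freed occurrences per rule and therefore only works while the replaced bigram has frequency at least three; once only bigrams of frequency at most two remain, it switches to a separate brute-force \Oh{n^2} subroutine that locates the second occurrence of each such bigram by scanning from a monotonically advancing position~$k$. Without this (or some substitute), the claimed bound of $\max((n/c)\lg n, n\upgauss{\lg \sigma_m}) + \Oh{\lg n}$ bits is not established for the whole computation, only for the search-and-replace phase.

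A smaller imprecision: the invariant ``since frequencies only decrease across turns, any bigram absent from the initial $F$ of round $k$ has frequency at most $t$ throughout the round'' is false as stated, because the bigrams $\texttt{a}X_{i+1}$ and $X_{i+1}\texttt{d}$ created in turn~$i$ did not exist at the start of the round and can attain frequency up to that of the replaced bigram, i.e., well above~$t$. Your algorithm does handle them via the type-(b) insertions, so the fix is only to restate the invariant: every bigram currently outside $F$ either existed at the start of the round (and then its frequency, which can only have decreased, is at most~$t$ or it was discarded in favor of entries of no smaller frequency) or is a newly created bigram that was explicitly tested against~$t$ upon creation. With that restatement, and with the output-storage argument added, your proof matches the paper's.
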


\paragraph{Output}
Finally, we show that we can store the computed grammar in text space.
More precisely, we want to store the grammar in an auxiliary array~$A$ packed at the end of the working space such that
the entry $A[i]$ stores the right hand side of the non-terminal~$X_i$, which is a bigram.
Thus the non-terminals are represented implicitly as indices of the array~$A$.
We therefore need to subtract $2\lg \sigma_i$ bits of space from our working space~$\alpha f_k$ after the $i$-th turn.
By adjusting~$\alpha$ in the above equations, we can deal with this additional space requirement
as long as the frequencies of the replaced bigrams are at least three (we charge two occurrences for growing the space of $A$).

When only bigrams with frequencies of at most two remain, we switch to a simpler algorithm, 
discarding the idea of maintaining the frequency table~$F$:
Suppose that we work with the text~$T_i$.
Let $k$ be a text position, which is $1$ in the beginning, but will be incremented in the following turns while
holding the invariant that $T[1..k]$ does not contain a bigram of frequency two.
We scan $T_i[k..n]$ linearly from left to right and check, for each text position~$j$, 
whether the bigram~$T_i[j]T_i[j+1]$ has another occurrence $T_i[j']T_i[j'+1] = T_i[j]T_i[j+1]$ with $j' > j+1$, and if so, 
\begin{enumerate}[label=(\alph*)]
  \item   append $T_i[j]T_i[j+1]$ to~$A$,
  \item   replace $T_i[j]T_i[j+1]$ and $T_i[j']T_i[j'+1]$ with a new non-terminal~$X_{i+1}$ to transform $T_i$ to $T_{i+1}$, and
  \item   recurse on $T_{i+1}$ with $k := j$ until no bigram with frequency two is left.
\end{enumerate}
The position~$k$, which we never decrement, helps us to skip over all text positions starting with bigrams with a frequency of one.
Thus, the algorithm spends \Oh{n} time for each such text position, and \Oh{n} time for each bigram with frequency two.
Since there are at most $n$ such bigrams, the overall running time of this algorithm is \Oh{n^2}.

\J{%
\begin{remark}[Pointer Machine Model]
Refraining from the usage of complicated algorithms,
our algorithm consists only of elementary sorting and scanning steps.
This allows us to run our algorithm on a pointer machine, yielding the same time bound of \Oh{n^2}.
For the space bounds, we assume that the text is given in $n$ words, 
where a word is large enough to store an element of $\Sigma_m$ or a text position.
\end{remark}
}%

\begin{figure}[ht]
\begin{center}
  \includegraphics[width=\linewidth]{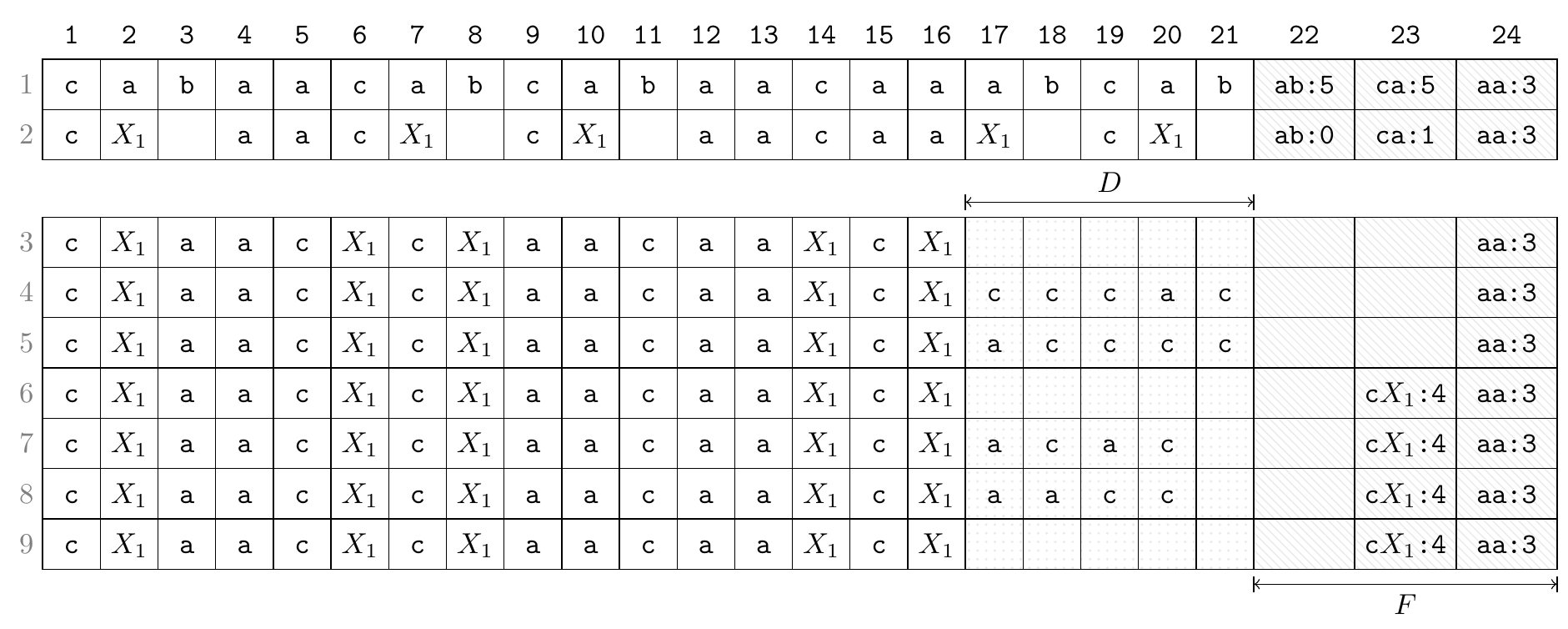}
  \caption{Step-by-step execution of the first turn of our algorithm on the string $T = \texttt{cabaacabcabaacaaabcab}$.
   The turn starts with the memory configuration given in Row~1. 
    Positions~1 to~21 are text positions, positions~22 to~24 belong to~$F$ ($f_0 = 3$, and it is assumed that a frequency fits into a text entry).
   Subsequent rows depict the memory configuration during Turn~1.
    A comment to each row is given in \cref{secStepTurn}. 
  }
  \label{figFirstTurn}
\end{center}
\end{figure}

\subsection{Step-by-Step Execution} \label{secStepTurn}
Here, we present an exemplary execution of the first turn (of the first round) on the input $T = \texttt{cabaacabcabaacaaabcab}$.
We visualize each step of this turn as a row in \cref{figFirstTurn}. 
A detailed description of each row follows:
\begin{description}
  \item[Row~1:] Suppose that we have computed $F$, which has a constant number of entries\footnote{In the later turns when the size $f_k$ becomes larger, $F$ will be put in the text space.}.
    The highest frequency is five achieved by $\mathtt{ab}$ and $\mathtt{ca}$.
               The lowest frequency represented in $F$ is three, 
               which becomes the threshold for a bigram to be present in $F$ such that bigrams whose frequencies drop below this threshold are removed from~$F$.
               This threshold is a constant for all later turns until $F$ is rebuilt (in the following round).
               During Turn~1, the algorithm proceeds now as follows:
  \item[Row~2:] Choose $\mathtt{ab}$ as a bigram to replace with a new non-terminal~$X_1$ (break ties arbitrarily).
               Replace every occurrence of $\mathtt{ab}$ with $X_1$ while decrementing frequencies in $F$ accordingly to the neighboring characters of the replaced occurrence.
  \item[Row~3:] Remove from $F$ every bigram whose frequency falls below the threshold.
               Obtain space for $D$ by aligning the compressed text~$T_1$.
               (The process of Row~2 and Row~3 can be done simultaneously.)
             \item[Row~4:] Scan the text and copy each character preceding an occurrence of~$X_1$ in~$T_1$ to~$D$.
             \item[Row~5:] Sort characters in $D$ lexicographically.
             \item[Row~6:] Insert new bigrams (consisting of a character of~$D$ and~$X_1$) whose frequencies are at least as large as the threshold.
             \item[Row~7:] Scan the text again and copy each character succeeding an occurrence of~$X_1$ in~$T_1$ to~$D$ (symmetric to Row~4).
             \item[Row~8:] Sort all characters in $D$ lexicographically (symmetric to Row~5).
             \item[Row~9:] Insert new bigrams whose frequencies are at least as large as the threshold (symmetric to Row~6).
\end{description}

\JO[%
\paragraph*{Acknowledgments}
\Acknowledgments{}
]{}

\subsection{Implementation}\label{secImplementation}
We provide a simplified implementation in C++17 at \url{https://github.com/koeppl/repair-inplace}.
The simplification is that we (a) fix the bit width of the text space to 16 bit, and (b) assume that $\Sigma$ is the byte alphabet.
We further skip the step increasing the bit width from $\lg \sigma_{i}$ to $\lg \sigma_{i+1}$.
This means that the program works as long as the characters of~$\Sigma_m$ fit into 16 bits.
The benchmark, whose results are displayed in \cref{tableEvaluation}, was conducted on a Mac Pro Server with an Intel Xeon CPU X5670 clocked at 2.93GHz running Arch Linux.
The implementation was compiled with \texttt{gcc-8.2.1} in the highest optimization mode \texttt{-O3}.
Looking at \cref{tableEvaluation}, we can see that the running time is super-linear to the input size on all text instances, 
which we obtained from the Pizza\&Chili corpus (\url{http://pizzachili.dcc.uchile.cl/}).
\Cref{tableDatasets} gives some characteristics about the used data sets.
We see that the number of rounds is the number of turns plus one for every unary string $\texttt{a}^{2^k}$ with an integer $k \ge 1$
since the text contains only one bigram with a frequency larger than two in each round. 
Replacing this bigram in the text makes~$F$ empty such that the algorithm recomputes~$F$ after each turn.
Note that the number of rounds can drop while scaling the prefix length based on the choice of the bigrams stored in~$F$.

\begin{table}
  \setlength{\tabcolsep}{0.6em}
  \centerline{%
    \begin{tabular}{l@{\hskip 1em}*{5}{r}}
\toprule
 & \multicolumn{5}{c}{Prefix Size in KiB} \\
\cmidrule{2-6}
Data Set  &  64 & 128 & 256 & 512 & 1024 \\
\midrule
\textsc{Escherichia\_Coli} & 20.68 & 130.47 & 516.67 & 1708.02 & 10112.47
\\
\textsc{cere}              & 13.69 & 90.83  & 443.17 & 2125.17 & 9185.58
\\
\textsc{coreutils}         & 12.88 & 75.64  & 325.51 & 1502.89 & 5144.18
\\
\textsc{einstein.de.txt}   & 19.55 & 88.34  & 181.84 & 805.81  & 4559.79
\\
  \textsc{einstein.en.txt} & 21.11 & 78.57  & 160.41 & 900.79  & 4353.81
\\
    \textsc{influenza}     & 41.01 & 160.68 & 667.58 & 2630.65 & 10526.23
\\
      \textsc{kernel}      & 20.53 & 101.84 & 208.08 & 1575.48 & 5067.80
\\
        \textsc{para}      & 20.90 & 175.93 & 370.72 & 2826.76 & 9462.74
\\
\textsc{world\_leaders}    & 11.92 & 21.82  & 167.52 & 661.52  & 1718.36
\\
\cdashlinelr{1-6}
$\texttt{aa}\cdots\texttt{a}$    & 0.35 & 0.92 & 3.90 & 14.16 & 61.74
\\
\bottomrule
\end{tabular}
  }%
  \caption{Experimental evaluation of our implementation described in \cref{secImplementation}. Table entries are running times in seconds.
    The last line is the benchmark on the unary string $\texttt{aa}\cdots\texttt{a}$.
  }
  \label{tableEvaluation}
\end{table}

\begin{table}
  \setlength{\tabcolsep}{0.6em}
  \centerline{%
    \begin{tabular}{l@{\hskip 1em}r@{\hskip 1em}*{5}{r}@{\hskip 1em}*{5}{r}}
\toprule
 &   &  \multicolumn{5}{c}{Turns $/1000$} &  \multicolumn{5}{c}{Rounds} \\
& & \multicolumn{5}{c}{Prefix Size in KiB} & \multicolumn{5}{c}{Prefix Size in KiB} \\
\cmidrule(lr){3-7} 
\cmidrule(lr){8-12} 
Data Set &  $\sigma$ 
 & $2^6$ & $2^7$ & $2^8$ & $2^9$ & $2^{10}$ 
 & $2^6$ & $2^7$ & $2^8$ & $2^9$ & $2^{10}$ \\
\midrule
\textsc{Escherichia\_Coli}  & 4 & 1.8 & 3.2 & 5.6  & 10.3 & 18.1 & 6  & 9  & 9  & 12 & 12
\\
\textsc{cere}               & 5 & 1.4 & 2.8 & 5.0  & 9.2  & 15.1 & 13 & 14 & 14 & 14 & 14
\\
\textsc{coreutils}          & 113 & 4.7 & 6.7 & 10.2 & 16.1 & 26.5 & 15 & 15 & 15 & 14 & 14
\\
\textsc{einstein.de.txt}    & 95 & 1.7 & 2.8 & 3.7  & 5.2  & 9.7  & 14 & 14 & 15 & 16 & 16
\\
  \textsc{einstein.en.txt}  & 87 & 3.3 & 3.5 & 3.8  & 4.5  & 8.6  & 16 & 15 & 15 & 15 & 17
\\
    \textsc{influenza}      & 7 & 2.5 & 3.7 & 9.5  & 13.4 & 22.1 & 11 & 12 & 14 & 13 & 15
\\
      \textsc{kernel}       & 160 & 4.5 & 8.0 & 13.9 & 24.5 & 43.7 & 10 & 11 & 14 & 14 & 13
\\
        \textsc{para}       & 5 & 1.8 & 3.2 & 5.8  & 10.1 & 17.6 & 12 & 12 & 13 & 13 & 14
\\
\textsc{world\_leaders}     & 87 & 2.6 & 4.3 & 6.1  & 10.0 & 42.1 & 11 & 11 & 11 & 11 & 14
\\
\cdashlinelr{1-12}
$\texttt{aa}\cdots\texttt{a}$ & 1 & 15 & 16 & 17 & 18 & 19 & 16 & 17 & 18 & 19 & 20
\\
\bottomrule
\end{tabular}
  }%
  \caption{Characteristics of our data sets. The number of turns and rounds are given for each of the prefix sizes 128, 256, 512, and 1024 KiB of the respective data sets. The number of turns reflecting the number of non-terminals is given in units of thousands.
    The turns of the unary string $\texttt{aa}\cdots\texttt{a}$ are in plain units (not divided by thousand).
  }
  \label{tableDatasets}
\end{table}

\section{Bit-Parallel Algorithm}\label{secWordPack}
In the case that the number of terminals and non-terminals $\tau := \sigma_m$ is $\oh{n}$, a word-packing approach becomes interesting.
We present techniques speeding up previously introduced operations on chunks of $\Oh{\log_\tau n}$ characters from \Oh{\log_\tau n} time to \Oh{\lg \lg \lg n} time.
In the end, these techniques allow us to speed up the sequential algorithm of \cref{thmGoal} from \Oh{n^2} time to the following:

\begin{theorem}\label{thmWordPack}
  We can compute Re-Pair on a string of length~$n$ in $\bigOh(n^2 \lg \log_\tau n$ $\lg \lg \lg n / \log_\tau n)$ time 
   with $\max( (n/c) \lg n, n \upgauss{\lg \tau}) + \Oh{\lg n}$ bits of working space including the text space,
   where $c \ge 1$ is a fixed constant, and $\tau$ is the number of terminal and non-terminal symbols.
\end{theorem}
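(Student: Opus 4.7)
The plan is to revisit the sequential algorithm behind \cref{thmGoal} in the regime $\tau = \oh{n}$, where every character of $T_i$ fits into $\upgauss{\lg \tau}$ bits and hence $\Theta(\log_\tau n)$ consecutive characters share a single machine word. Three subroutines account for essentially all of the $\Oh{n^2}$ running time: (a) inside \cref{lemBatchedCount}, the left-to-right scan of the text that, at each position, performs a binary search on the sorted length-$d$ table $F'$; (b) after each turn, the linear scan of $F$ used to decrement the frequencies of the bigrams adjacent to the replaced bigram \texttt{bc}; and (c) the gathering and sorting of the buffer $D$ of neighbors of \texttt{bc}. My goal is to process each word-aligned chunk of $\Theta(\log_\tau n)$ text characters in $\Oh{\lg \log_\tau n \cdot \lg \lg \lg n}$ time, which gives the advertised speedup factor of $\log_\tau n / (\lg \log_\tau n \cdot \lg \lg \lg n)$ over \cref{thmGoal}.

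For (a), given one word $w$ packing $\Theta(\log_\tau n)$ characters, the $\Theta(\log_\tau n)$ bigrams starting inside $w$ are extracted in constant time by two shift-and-mask operations. Each extracted bigram then needs to be located in the sorted table $F'$; I would perform this in parallel across the packed bigrams by a packed binary search of depth $\Oh{\lg \log_\tau n}$, each of whose levels uses a bit-parallel compare-and-select step whose cost is $\Oh{\lg \lg \lg n}$ via the deterministic packed sorting primitives of Han and Thorup (or, equivalently, an atomic-heap predecessor lookup per bigram). Bigrams that straddle two consecutive words are handled by additionally processing a ``bridge'' word formed from the suffix of $w$ and the prefix of the next word, at most doubling the per-chunk cost.

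For (c), the buffer $D$ holds characters rather than bigrams; sorting $|D|$ packed characters with the same integer-sorting primitive takes $\Oh{|D| \lg \log_\tau n \lg \lg \lg n / \log_\tau n}$ time, which plugs directly into the amortised charge of $\Oh{n}$ per text position in \cref{thmGoal} and reduces it by exactly the desired factor. For (b), the scan of $F$ becomes a packed scan over $\Oh{f_k / \log_\tau n}$ words, and a single lookup or update inside one such word costs $\Oh{\lg \log_\tau n \cdot \lg \lg \lg n}$. Combining these three ingredients, rerunning the accounting in \cref{eqTotalTime} (whose structure is unchanged) and summing over the $\Oh{\lg n}$ rounds yields the claimed bound $\Oh{n^2 \lg \log_\tau n \lg \lg \lg n / \log_\tau n}$. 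The working-space bound in \cref{thmWordPack} is inherited verbatim from \cref{thmGoal}: the word-packed structures reuse the same bits, and the auxiliary tables supporting the packed primitives fit into the $\Oh{\lg n}$ slack.

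The main obstacle I expect is that the packing factor $\log_\tau n$ is defined with respect to the \emph{final} alphabet size $\tau = \sigma_m$, while during the computation the alphabet $\Sigma_i$ grows from $\Sigma_0 = \Sigma$ to $\Sigma_m$; fixing the bit-width of each character to $\upgauss{\lg \tau}$ from the first turn is what matches the working-space bound of the theorem, but it sacrifices the sharper packing that would be available while $\sigma_i$ is small, and one must verify that the shift-and-mask extraction and bridge-word handling remain correct under this fixed layout. A second subtle point is ensuring that the non-overlapping frequency convention of Re-Pair is respected inside each word when \texttt{bc} has the form $\texttt{b}=\texttt{c}$; I would deal with this by a one-word sliding-window mask that greedily selects non-overlapping occurrences during the bit-parallel extraction, which contributes no asymptotic overhead.
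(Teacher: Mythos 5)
Your plan diverges from the paper's proof, and it has concrete holes. The central claim---that each word of $\Theta(\log_\tau n)$ packed characters can be pushed through the counting scan of \cref{lemBatchedCount} in \Oh{\lg \log_\tau n \lg\lg\lg n} time---does not hold up. First, a binary search in the sorted table~$F'$ of size~$d$ has depth \Oh{\lg d}, not \Oh{\lg \log_\tau n}, and $d=f_k$ grows up to $n$. Second, and more fundamentally, an entry of $F$ or $F'$ is a bigram together with a frequency and thus occupies $\lg(\sigma^2 n/2)$ bits, i.e.\ \Om{\lg n} bits, so only \Oh{1} entries fit into a machine word: neither your ``packed scan of $F$ over \Oh{f_k/\log_\tau n} words'' nor the increments of up to $\Theta(\log_\tau n)$ scattered counters per processed chunk can be word-packed, since those counters live in distinct words. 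The scan of your part~(a) therefore still costs \Om{n} time per substring~$S_j$ in the worst case, i.e.\ \Om{n^2/f_k} per round, which is already \Om{n^2} in the first round where $f_0=\Oh{1}$. You also leave out the per-turn replacement of all occurrences of the chosen bigram and the compaction of the freed text space, which must themselves be word-packed (otherwise $\sum_i n_i$ alone can reach \Om{n^2}).

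The paper takes a different route: \cref{lemBatchedCount} is not accelerated at all. Instead, broadword pattern matching plus popcount yields the frequency of a \emph{single} bigram in the whole text in \Oh{n \lg\lg\lg n/\log_\sigma n} time (\cref{lemBroadwordBigram}; the case $\texttt{b}=\texttt{c}$ indeed needs a separate treatment of runs, roughly as you anticipate), and invoking this once per text position gives the $d$ most frequent bigrams in \Oh{n^2 \lg\lg\lg n/\log_\sigma n} time \emph{independently of~$d$} (\cref{lemWordPackedCount}). Each round then runs whichever of the two counting routines is cheaper; the factor $\lg\log_\tau n$ in \cref{thmWordPack} is not a per-operation search cost but, via the crossover between $k/\gamma^k$ and $\lg\lg\lg n/\log_\tau n$ in \cref{eqTotalTimeWordPacked}, essentially counts the early rounds in which the bit-parallel routine wins. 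The operations on~$F$ (find minimum/maximum, update, lookup) are supported in \Oh{\lg f_k} time by a min-heap/max-heap pair with a pointer array, not by packing, and the replacement and shifting of the text are done with dedicated word-packed masking. Your worry about fixing the character width to $\upgauss{\lg\tau}$ is a non-issue in the paper's setup, since the text is widened from $\upgauss{\lg\sigma_i}$ to $\upgauss{\lg\sigma_{i+1}}$ bits per turn and $\log_\tau n$ only enters the bounds as a lower bound on $\log_{\sigma_i} n$.
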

Note that the \Oh{\lg \lg \lg n} factor is due to the $\fnPopcount$ function~\cite[Algo.~1]{vigna08broadword}, which has been optimized to a single instruction on modern computer architectures.

\subsection{Broadword Search}\label{secBroadwordSearch}

First, we deal with accelerating the computation of the frequency of a bigram in $T$ by exploiting broadword search thanks to the word RAM model.
We start with the search of single characters and subsequently extend this result to bigrams:

\J{%
    \begin{figure}
      \centerline{%
        \begin{tabularx}{\linewidth}{l@{\hskip 1em}X}
          \toprule
          Operation & Description
          \\\midrule
          $X \ll j$ & shift $X$ $j$~positions to the left \\
          $X \gg j$ & shift $X$ $j$~positions to the right \\
          $\neg X$ & bitwise NOT of $X$ \\
          $X \otimes Y$ & bitwise XOR of $X$ and $Y$ \\
$-1$ & bit vector consisting only of one bits \\
$\fnMSB(X)$ & returns the position of the most significant set bit of~$X$, i.e., $\gauss{\lg X}+1$; see~\cite[Sect.~5]{fredman93fusion} for a constant time algorithm using {\Oh{\lg n}} bits \\
$\fnDeletePrefixRun(X)$ & sets all bits of the maximal prefix of consecutive ones to zero \\
$\fnDeleteSuffixRun(X)$ & sets all bits of the maximal suffix of consecutive ones to zero 
          \\\bottomrule
        \end{tabularx}
      }%
      \caption{Operations used in \cref{figBroadwordMatching,figBroadwordBigram} for two bit vectors~$X$ and~$Y$.
      All operations can be computed in constant time.
      See \cref{tableExampleDeletePrefixRun} for an example of $\fnDeleteSuffixRun$ and $\fnDeletePrefixRun$.
    }
      \label{tableBroadwordDesc}
    \end{figure}

\begin{figure}
  \centering
  \begin{tabular}[t]{l@{\hskip 0.5em}l}
          \multicolumn{2}{c}{$\fnDeletePrefixRun(X)$} \\
          \toprule
          Operation & Example
          \\\midrule
    $X$                                                   & \texttt{11100110}\\
    $\neg X$                                              & \texttt{00011001}\\
    $1 \ll (1+\fnMSB(\neg X))$                              & \texttt{00100000}\\
    $(1 \ll (1+\fnMSB(\neg X)))-1$                        & \texttt{00011111}\\
    $((1 \ll (1+\fnMSB(\neg X)))-1) \fnAND X $            & \texttt{00000110}
          \\\bottomrule
        \end{tabular}
        \begin{tabular}[t]{l@{\hskip 0.5em}l}
          \multicolumn{2}{c}{$\fnDeleteSuffixRun(X)$} \\
          \toprule
          Operation                                       & Example
          \\\midrule
    $X$                                                   & \texttt{01100111}\\
    $\neg X$                                              & \texttt{10011000}\\
    $\neg X - 1$                                          & \texttt{10010111}\\
    $(\neg X - 1) \fnAND X$                               & \texttt{00000111}\\
    $\neg((\neg X - 1) \fnAND X)$                         & \texttt{11111000}\\
    $\neg((\neg X - 1) \fnAND X) \fnAND X$                & \texttt{01100000}
          \\\bottomrule
        \end{tabular}

      \caption{%
        Step-by-step execution of $\fnDeletePrefixRun(X)$ and $\fnDeleteSuffixRun(X)$
        introduced in \cref{tableBroadwordDesc} on a bit vector~$X$.
      }
      \label{tableExampleDeletePrefixRun}
    \end{figure}
  }%

\begin{lemma}\label{lemBroadwordSearch}
  We can count the occurrences of a character~$c \in \Sigma$ in a string of length~$\Oh{\log_\sigma n}$ in \Oh{\lg \lg \lg n} time.
\end{lemma}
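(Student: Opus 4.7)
The plan is to reduce the counting problem to a single application of $\fnPopcount$ after standard broadword preprocessing. Since the string has length $\Oh{\log_\sigma n}$ and each character takes $\upgauss{\lg \sigma}$ bits, it occupies $\Oh{\lg n}$ bits and therefore fits into $\Oh{1}$ machine words. I will assume for clarity that it fits into a single word~$W$; the general case follows by handling $\Oh{1}$ words independently and summing the results.

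First I would construct a ``comparison word''~$C$ in which the character~$c$ is stored in each of the $\upgauss{\lg\sigma}$-bit slots of~$W$. This is done in constant time by multiplying~$c$ by the constant $\sum_{j\ge 0} 2^{j\upgauss{\lg\sigma}}$ masked to word length (a standard broadword splatting trick). Next, I would compute $W \fnXOR C$, which yields a word whose $j$-th slot is zero iff the $j$-th character of~$W$ equals~$c$. The core step is to transform this word, call it $Z$, into a word in which each zero slot is marked by exactly one set bit and each nonzero slot contains only zeros. For $b$-bit slots this is the classical test-for-zero-field: subtract a word with~$1$ in the lowest bit of each slot from~$Z$, AND with $\neg Z$, and AND with the mask having~$1$ in the high bit of each slot, i.e., $(Z - L) \fnAND \neg Z \fnAND H$ where $L$ and $H$ are precomputed constants depending only on $\upgauss{\lg\sigma}$. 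Each zero slot becomes a single set bit and each nonzero slot is zeroed out; this takes constant time.

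Finally, I would apply $\fnPopcount$ to the resulting word, yielding the number of matches of~$c$ in $W$. By~\cite[Algo.~1]{vigna08broadword}, $\fnPopcount$ on an $\Oh{\lg n}$-bit word runs in \Oh{\lg\lg\lg n} time on the word RAM model, which dominates the overall cost. Summing over the $\Oh{1}$ words covering the string gives the claimed \Oh{\lg\lg\lg n} bound.

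The main obstacle is a technicality rather than an algorithmic one: the slot width $\upgauss{\lg\sigma}$ need not divide the word size, so care is needed at the boundary between consecutive words, and the test-for-zero-slot identity must be invoked with masks $L$ and $H$ tailored to the exact slot width. Both are standard but must be spelled out consistently with the other broadword subroutines of \cref{tableBroadwordDesc} that will be reused in the extension to bigrams in the next lemma.
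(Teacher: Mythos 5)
Your proposal is correct and follows essentially the same route as the paper's proof: splat $c$ across the packed word, XOR against the string, apply the broadword zero-field test with the masks $L^q$ and $H^q$, and finish with a single $\fnPopcount$ call, which dominates the cost at \Oh{\lg\lg\lg n}. The only cosmetic difference is that the paper additionally smears each high-bit marker into a full $\upgauss{\lg\sigma}$-bit block and divides the popcount by $\upgauss{\lg\sigma}$ (it reuses that full mask later for bigram search and replacement), whereas you popcount the single-bit markers directly; note that both versions rely on the three-operation zero test $(Z-L)\fnAND\neg Z\fnAND H$, which can report a false positive in a slot equal to $1$ lying just above a genuinely zero slot, a caveat avoided by the exact variant $H\fnAND\neg\bigl(Z\fnOR((Z\fnOR H)-L)\bigr)$.
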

\JO[See the appendix for a proof, which is a variation of broadword searching zero bytes~{\cite[Sect.~7.1.3]{knuthArt4bitwise}}.]{}
\JProof{\cref{lemBroadwordSearch}}{%
  Let $q$ be the largest multiple of $\upgauss{\lg \sigma}$ fitting into a computer word, divided by $\upgauss{\lg \sigma}$.
Let $S \in \Sigma^*$ be a string of length~$q$.
Our first task is to compute a bit mask of length~$q \upgauss{\lg \sigma}$ marking with a \bsq{1} the occurrences of a character~$c \in \Sigma$ in~$S$.
For that, we follow the constant time broadword pattern matching of~\citet[Sect.~7.1.3]{knuthArt4bitwise}\footnote{See \url{https://github.com/koeppl/broadwordsearch} for a practical implementation.}:
Let $H$ and $L$ be two bit vectors of length $\upgauss{\lg \sigma}$ having marked only the most significant or the least significant bit, respectively.
Let $H^q$ and $L^q$ denote the $q$ times concatenation of $H$ and $L$, respectively.
Then the operations in \cref{figBroadwordMatching} yield an array~$X$ of length~$q$ with 
\begin{align}\label{eqBroadwordX}
X[i] = 
\begin{cases}
  2^{\upgauss{\lg \sigma}}-1 & \text{~if~} S[i] = c, \\
  0 & \text{~otherwise,} \\
\end{cases}
\end{align}
where each entry of $X$ has $\upgauss{\lg \sigma}$ bits.
\begin{figure}
\newcommand*{\ParboxO}[1]{}%
    \newcommand*{\ParboxC}[1]{\parbox[t]{10em}{#1}}
  \centering{%
    \begin{tabularx}{\linewidth}{l@{\hskip 1em}X@{\hskip 1em}ll}
      \toprule
      Operation & Description & & Example 
      \\\midrule
      read $S$ & & & $\texttt{101010000} \rightarrow S$
        \\
$X \gets S \fnXOR c^q$ & 
match $S$ with $c^q$; $X[i] = 0 \Leftrightarrow S[i] = c$
                       & 
                       \ParboxO{%
                         \phantom{.} \\
                         $\fnXOR$ \\
                       $=$} &
                       \ParboxC{%
                         $\texttt{101010000} = S$ \\
                       $\texttt{010010010}$ \\
                     $\texttt{111000010} \rightarrow X$}
\\ \cdashlinelr{1-4}
$Y \gets X - L^q$ &
                  & 
                  \ParboxO{%
                         \phantom{.} \\
                         $-$ \\
                       $=$} &
                       \ParboxC{%
                         $\texttt{111000010} = X$ \\
                         $\texttt{001001001}$ \\
                         $\texttt{101111001} \rightarrow Y$}
\\ \cdashlinelr{1-4}
$X \gets Y \fnAND \neg X$ &
$X[i] \fnAND 2^{\upgauss{\lg\sigma}}-1 = 1 \Leftrightarrow S[i] = c$ &
                  \ParboxO{%
                         \phantom{.} \\
                         $\fnAND$ \\
                       $=$} &
                       \ParboxC{%
                         $\texttt{101111001} = Y$ \\
                       $\texttt{000111101}$ \\
                     $\texttt{000111001} \rightarrow X$}
\\ \cdashlinelr{1-4}
$X \gets X \fnAND H^q$ &
$X[i] = 0 \Leftrightarrow S[i] \neq c$ &
                  \ParboxO{%
                         \phantom{.} \\
                         $\fnAND$ \\
                       $=$} &
                       \ParboxC{%
                         $\texttt{000111001} = X$ \\
                       $\texttt{100100100}$ \\
                     $\texttt{000100000} \rightarrow X$}
\\ \cdashlinelr{1-4}
                       $X \gets (X - (X \gg (\upgauss{\lg\sigma}-1))) \fnOR X$ &
$X$ as in \cref{eqBroadwordX} &
                  \ParboxO{%
                         \phantom{.} \\
                         $-$ \\
                         $=$ \\
                         $\fnOR$ \\
                       $=$} &
                       \ParboxC{%
                         $\texttt{000100000} = X$ \\
                       $\texttt{000001000}$ \\
                     $\texttt{000011000}$ \\
                   $\texttt{000100000}$ \\
                 $\texttt{000111000} \rightarrow X$}
      \\\bottomrule
    \end{tabularx}
  }%
  \caption{Broadword matching all occurrences of a character in a string~$S$ fitting into a computer word. For the last step, special care has to be taken when the last character of~$S$ is a match, as shifting~$X$ $\upgauss{\lg \sigma}$~bits to the right might erase a \bsq{1} bit witnessing the rightmost match.
    In the description column, $X$ is treated as an array of integers with bit width~$\upgauss{\lg \sigma}$.
    In this example, $S = \texttt{101010000}$,
  $c$ has the bit representation~\texttt{010} with $\lg \sigma = 3$, and $q = 3$.
  }
  \label{figBroadwordMatching}
\end{figure}

To obtain the number of occurrences of~$c$ in~$S$, we use the $\fnPopcount$ operation returning the number of zero bits in~$X$,
and divide the result by~$\upgauss{\lg \sigma}$.
The $\fnPopcount$ instruction takes \Oh{\lg \lg \lg n} time~\cite[Algo.~1]{vigna08broadword}.
}%
Having \cref{lemBroadwordSearch}, we show that we can compute the frequency of a bigram in~$T$ in \Oh{n \lg \lg \lg n / \log_\sigma n} time.
For that, we partition~$T$ into strings of length~$\gauss{\log_\sigma n}$  fitting into a computer word, and call each string of this partition a \teigi{chunk}.
For each chunk~$S$, we call $\fnFind(c, S)$ to compute the bit vector $X$ storing the occurrences of $c$ in $S$.
In case that we want to use \cref{lemBroadwordSearch} when $c \in \Sigma^2$ is a bigram, we interpret $T \in \Sigma^n$ of length~$n$ as a text $T \in (\Sigma^2)^{\upgauss{n/2}}$ of length~$\upgauss{n/2}$.
The result is, however, not the frequency of the bigram~$c$ in general.
For computing the frequency a bigram~$\texttt{bc} \in \Sigma^2$, we distinguish the cases $\texttt{b} \not= \texttt{c}$ and $\texttt{b} = \texttt{c}$.

\block{Case $\texttt{b} \not= \texttt{c}$}
By applying \cref{lemBroadwordSearch} to find the character~$\texttt{bc}\in\Sigma^2$ in a chunk~$S$ (interpreted as a string of length $\gauss{q/2}$ on the alphabet $\Sigma^2$), we obtain the number of occurrences of~\texttt{bc} starting at odd positions in~$S$.
To obtain this number for all even positions, we apply the procedure to $\texttt{d}S$ with $\texttt{d} \in \Sigma \setminus \{\texttt{b},\texttt{c}\}$.
Additional care has to be taken at the borders of each chunk matching the last character of the current chunk and the first character of the subsequent chunk with \texttt{b} and \texttt{c}, respectively.

\block{Case $\texttt{b} = \texttt{c}$}
This case is more involving as overlapping occurrences of~\texttt{bb} can occur in~$S$, which we must not count.
To this end, we watch out for \teigi{runs} of~\texttt{b}'s, i.e., substrings of maximal lengths consisting of the character~\texttt{b} 
(here, we consider also maximal substrings of~\texttt{b} with length~$1$ as a run).
We separate these runs into runs ending either at even or at odd positions.
We do this because the frequency of~\texttt{bb} in a run of \texttt{b}'s ending at an even (resp.\ odd) position is the number of occurrences of~\texttt{bb} within this run ending at an even (resp.\ odd) position.
We can compute these positions similarly to the approach for $\texttt{b} \not= \texttt{c}$
by first (a) hiding runs ending at even (resp.\ odd) positions, and then (b) counting all bigrams ending at even (resp.\ odd) positions. 
Runs of~\texttt{b} that are a prefix or a suffix of $S$ are handled individually 
if $S$ is neither the first nor the last chunk of~$T$, respectively.
That is because a run passing a chunk border starts and ends in different chunks.
To take care of those runs, we remember the number of~\texttt{b}'s of the longest suffix of every chunk, and 
accumulate this number until we find the end of this run, which is a prefix of a subsequent chunk.
The procedure for counting the frequency of~\texttt{bb} inside~$S$ is explained with an example in \cref{figBroadwordBigram}\JO[~in the appendix]{}.
With the aforementioned analysis of the runs crossing chunk borders, we can extend this procedure to count the frequency of~\texttt{bb} in~$T$.
We conclude:

\begin{lemma}\label{lemBroadwordBigram}
  We can compute the frequency of a bigram in a string~$T$ of length~$n$ whose characters are drawn from an alphabet of size~$\sigma$ 
  in \Oh{n \lg \lg \lg n / \log_\sigma n} time.
\end{lemma}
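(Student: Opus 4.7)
The plan is to partition $T$ into $\lceil n / q \rceil$ chunks of length $q := \gauss{\log_\sigma n}$ (the last chunk possibly shorter) and to process each chunk in $\Oh{\lg \lg \lg n}$ time using the broadword machinery of \cref{lemBroadwordSearch}, so that the total cost is $\Oh{n \lg \lg \lg n / \log_\sigma n}$ as claimed. Boundary bigrams straddling two consecutive chunks are handled in constant time per chunk boundary and therefore contribute only a lower-order term.

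For the case $\texttt{b} \neq \texttt{c}$, I would reinterpret each chunk $S$ as a string over $\Sigma^2$ of length $\gauss{q/2}$ and invoke \cref{lemBroadwordSearch} with the super-character $\texttt{bc}$ to count occurrences of~$\texttt{bc}$ that start at odd positions of $S$. To catch occurrences starting at even positions, I would run the same routine on $\texttt{d}S$ for some $\texttt{d}\in\Sigma\setminus\{\texttt{b},\texttt{c}\}$, which shifts the alignment by one. Finally, for every chunk boundary I would explicitly test the two characters spanning it. Since $\texttt{b}\neq \texttt{c}$, two occurrences of $\texttt{bc}$ cannot overlap, so summing the per-chunk counts directly yields the non-overlapping frequency in~$T$.

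The main obstacle is the case $\texttt{b} = \texttt{c}$, where occurrences inside a run of~$\texttt{b}$'s may overlap and hence must not be double-counted: a run of length~$\ell$ contributes $\gauss{\ell / 2}$ non-overlapping copies of~$\texttt{bb}$, whose value depends on the parity of the starting position of the run. My plan is, inside each chunk, to compute by broadword search the bit mask marking positions of~$\texttt{b}$, and then to isolate the runs of $\texttt{b}$'s whose rightmost position lies at an even (respectively odd) position of $T$. On either family I would use the bigram-matching trick from the previous case to count the non-overlapping $\texttt{bb}$'s; a run of maximal~$\texttt{b}$'s of length~$\ell$ ending at a given parity contributes exactly the number of $\texttt{bb}$'s it contains that end at that same parity, which equals $\gauss{\ell/2}$. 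Runs that span a chunk boundary are treated separately: I would use $\fnDeleteSuffixRun$ to remove the trailing $\texttt{b}$-run from the current chunk and $\fnDeletePrefixRun$ to remove the leading $\texttt{b}$-run from the next chunk, so that these truncated chunks contain only fully internal runs. The trimmed suffix- and prefix-lengths of $\texttt{b}$'s are accumulated across chunk boundaries by a single $\Oh{1}$-word running counter; whenever a run is closed (the next chunk either is empty of a leading $\texttt{b}$ or starts with a non-$\texttt{b}$ character), I add $\gauss{\ell/2}$ with $\ell$ the total accumulated length, using the parity of the run's starting position in $T$ to resolve ties.

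Each chunk is handled with a constant number of broadword operations from \cref{tableBroadwordDesc} plus one invocation of $\fnPopcount$, which by~\cite[Algo.~1]{vigna08broadword} runs in $\Oh{\lg\lg\lg n}$ time; this dominates the per-chunk cost. With $\Oh{n/\log_\sigma n}$ chunks and $\Oh{1}$ work per chunk boundary, the overall running time is $\Oh{n \lg \lg \lg n / \log_\sigma n}$, as required. The subtle part of the argument is not the asymptotic accounting but the parity bookkeeping for runs of~$\texttt{b}$ crossing chunk borders; all other steps follow directly from \cref{lemBroadwordSearch} and the operations of \cref{tableBroadwordDesc}.
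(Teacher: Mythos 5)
Your proposal is correct and follows essentially the same route as the paper: chunk the text into words of $\gauss{\log_\sigma n}$ characters, reduce the case $\texttt{b}\neq\texttt{c}$ to two alignment-shifted invocations of the character search over $\Sigma^2$, and for $\texttt{b}=\texttt{c}$ split runs by the parity of their ending position and accumulate runs crossing chunk borders, with a final $\fnPopcount$ dominating the per-chunk cost. The only nitpick is your remark about using the parity of a cross-border run's \emph{starting} position: since an accumulated run of total length~$\ell$ contributes $\gauss{\ell/2}$ occurrences regardless of parity, no tie-breaking is actually needed there.
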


\begin{figure}[t]
    \newcommand*{\ParboxO}[1]{}
    \newcommand*{\ParboxC}[1]{\parbox[t]{10em}{#1}}
      \centering{%
        \begin{tabular}{lp{3cm}@{\hskip 1em}rl}
      \toprule
      Operation & Description & & Example 
      \\\midrule
      input~$S$ &             & & $\texttt{bbdbbdcbbbdbb} = S$ 
      \\
      $X \gets \fnFind(\texttt{b}, S)$    &
      search \texttt{b} in $S$ & 
                               &
      $\texttt{1101100111011} \rightarrow X$ 
      \\
      $X \gets \fnDeletePrefixRun(X)$ &
      erase prefix of~\texttt{b}'s &
                               &
$\texttt{0001100111011} \rightarrow X$
\\
      $M \gets \fnDeleteSuffixRun(X)$ &
      erase suffix of~\texttt{b}'s &
                               &
$\texttt{0001100111000} \rightarrow M$ 
\\
      $B \gets \fnFindBigram(\texttt{01},M) \fnAND M$ &
      starting of each \texttt{b} run &
                               &
      $\texttt{0001000100000} \rightarrow B$
\\
      $E \gets \fnFindBigram(\texttt{10},M) \fnAND M$ &
      end of each \texttt{b} run & 
                               & 
$\texttt{0000100001000} \rightarrow E$ 
\\
      $M \gets M \fnAND \neg B$ &
      trim head of runs & 
                               &
$\texttt{0000100011000} \rightarrow M$ 
\\ \cdashlinelr{1-4}
      $X \gets B - (E \fnAND (\texttt{01})^{q/2} )$ &
      bit mask for all runs ending at even positions &
      \ParboxO{%
        \phantom{.} \\
        $-$ \\
        \phantom{.} \\
    $=$} &
      \ParboxC{%
        $\texttt{0001000100000}$ = B \\
        $(\texttt{0000100001000}\fnAND$ \\
        \phantom{.}$\texttt{0101010101010})$ \\
    $\texttt{0001000011000} \rightarrow X$}
\\ \cdashlinelr{1-4}
      $X \gets M \fnAND X$ &
      occurrences of all \texttt{b}s belonging to runs ending at even positions &
      \ParboxO{%
        \phantom{.} \\
        $\fnAND$ \\
        $=$} &
      \ParboxC{%
      $\texttt{0001000011000} = X$ \\
      $\texttt{0000100011000} = M$ \\
    $\texttt{0000000011000} \rightarrow X$}
\\ \cdashlinelr{1-4}
      $\fnPopcount(X \fnAND (\texttt{01})^{q/2})$ &
      frequency of all \texttt{bb}s belonging to runs ending at even positions &
      \ParboxO{%
        \phantom{.} \\
        $\fnAND$ \\
        $=$
    } &
      \ParboxC{%
        $\texttt{0000000011000} = X$ \\
        $\texttt{0101010101010}$ \\
      $\texttt{0000000001000}$ }
\\ \cdashlinelr{1-4}
      $X \gets B - (E \fnAND (\texttt{10})^{q/2})$ &
      bit mask for all runs ending at odd positions &
      \ParboxO{%
        \phantom{.} \\
        $-$ \\
        \phantom{.} \\
    $=$} &
      \ParboxC{%
        $\texttt{0001000100000}$ = B \\
        $(\texttt{0000100001000} \fnAND$ \\
        \phantom{.}$\texttt{1010101010101})$ \\
      $\texttt{0000100100000} \rightarrow X$}
\\ \cdashlinelr{1-4}
      $X \gets M \fnAND X$ &
      occurrences of all \texttt{b}s belonging to runs ending at odd positions &
      \ParboxO{%
        \phantom{.} \\
        $\fnAND$ \\
        = 
    } &
      \ParboxC{%
        $\texttt{0000100100000} = X$ \\
        $\texttt{0000100011000} = M$ \\
        $\texttt{0000100000000} \rightarrow X$
    }
\\ \cdashlinelr{1-4}
      $\fnPopcount(X \fnAND (10)^{q/2})$ &
      frequency of all \texttt{bb}s belonging to runs ending at odd positions &
      \ParboxO{%
        \phantom{.} \\
        $\fnAND$ \\
    $=$} &
      \ParboxC{%
      $\texttt{0000100000000} = X$ \\
      $\texttt{1010101010101}$ \\
      $\texttt{0000100000000}$
      }
      \\\bottomrule
    \end{tabular}
      }%
      \caption{
        Finding a bigram \texttt{bb} in a string~$S$ of bit length~$q$, where $q$ is the largest multiple of $2\upgauss{\lg \sigma}$ fitting into a computer word, divided by $\upgauss{\lg \sigma}$.
        In the example, we represent the strings~$M$, $B$, $E$, and $X$ as arrays of integers with bit width~$x := \upgauss{\lg \sigma}$
and write $\texttt{1}$ and $\texttt{0}$ for $1^x$ and $0^x$, respectively.
Let $\fnFindBigram(\texttt{bc}, X) := \fnFind(\texttt{bc}, X) \fnOR \fnFind(\texttt{bc}, \texttt{d}X)$ for $\texttt{d} \neq \texttt{b}$ be the frequency of a bigram~\texttt{bc} with $\texttt{b} \not= \texttt{c}$ as described in \cref{secBroadwordSearch}.
Each of the $\fnPopcount$ queries gives us one occurrence as a result (after dividing the returned number by $\upgauss{\lg \sigma}$), thus the frequency of~\texttt{bb} in~$S$, without looking at the borders of~$S$, is two.
As a side note, modern computer architectures allow us to shrink the $0^x$ or $1^x$ blocks to single bits by instructions like \texttt{\_pext\_u64} taking a single CPU cycle.
      }
      \label{figBroadwordBigram}
    \end{figure}

\subsection{Bit-Parallel Adaption}
Similarly to \cref{lemBatchedCount}, we present an algorithm computing the $d$ most frequent bigrams, but now with the word-packed search of \cref{lemBroadwordBigram}.

\begin{lemma}\label{lemWordPackedCount}
   Given an integer~$d$ with $d \ge 1$,
   we can compute the frequencies of the $d$ most frequent bigrams in a text of length $n$ whose characters are drawn from an alphabet of size~$\sigma$ 
   in \Oh{n^2 \lg \lg \lg n / \log_\sigma n} time using $d \upgauss{\lg(\sigma^2 n/2)} + \Oh{\lg n}$ bits.
\end{lemma}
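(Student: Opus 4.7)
The plan is to adapt \cref{lemBatchedCount} by exchanging its binary-search-based text scans with the bit-parallel frequency counter of \cref{lemBroadwordBigram}, and by eliminating the auxiliary table $F'$ so that only $F$ with $d\upgauss{\lg(\sigma^2 n/2)}$ bits is kept. We maintain $F$ as a min-heap keyed by frequency, of capacity~$d$, packed inside the allotted space. The main idea is to iterate over $p = 1, \ldots, n-1$, and for each position~$p$ first test whether the bigram $T[p..p+1]$ already occurs at some position $q < p$ by running the same chunk-wise bit-parallel search as in \cref{lemBroadwordBigram}, but restricted to the prefix $T[1..p-1]$; this check requires no $\fnPopcount$, since it suffices to observe whether any chunk's match bitmask is non-zero, and it runs in $\Oh{p \lg \lg \lg n / \log_\sigma n}$ time. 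Whenever $p$ is the first occurrence of its bigram, we invoke \cref{lemBroadwordBigram} on all of $T$ to obtain the total frequency in $\Oh{n \lg \lg \lg n / \log_\sigma n}$ time, and insert the pair into $F$ using standard min-heap logic in $\Oh{\lg d}$ time.

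The time analysis has three parts. Summing the first-occurrence tests gives $\sum_{p=1}^{n-1}\Oh{p \lg \lg \lg n / \log_\sigma n} = \Oh{n^2 \lg \lg \lg n / \log_\sigma n}$. Since at most $n$ positions witness a first occurrence, the number of invocations of \cref{lemBroadwordBigram} is at most~$n$, contributing $\Oh{n^2 \lg \lg \lg n / \log_\sigma n}$ in total. The at most $n$ heap operations contribute $\Oh{n \lg d}$, which is absorbed because $\lg d \le \lg n$ and $\lg n \cdot \log_\sigma n = \lg^2 n / \lg \sigma$ is $\Oh{n \lg \lg \lg n}$ for all relevant~$n$ and~$\sigma \ge 2$. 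The space is $d\upgauss{\lg(\sigma^2 n/2)}$ bits for $F$ plus $\Oh{\lg n}$ bits for position pointers, chunk offsets, and the constant-sized masks carried across chunks during the bit-parallel passes. Correctness is immediate, since each distinct bigram of $T$ is considered exactly once, namely at its first occurrence, and $F$ finishes holding the $d$ bigrams of highest frequency in~$T$.

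The main obstacle will be porting the border-handling logic of \cref{lemBroadwordBigram} (for the case $\texttt{bb}$ and for runs of $\texttt{b}$'s crossing chunk boundaries) to the prefix-restricted search used by the first-occurrence test: we must mask the last chunk so that matches at positions $\ge p$ are discarded, and either truncate or correctly accumulate any partial run of $\texttt{b}$'s that would otherwise extend past position $p-1$. Both adjustments reduce to a single constant-time bit mask on the final chunk plus maintaining the trailing-run counter exactly as described in \cref{secBroadwordSearch}, after which the rest of the machinery of \cref{lemBroadwordBigram} carries through verbatim. Beyond this careful boundary treatment, no further difficulty arises, and the stated bounds follow.
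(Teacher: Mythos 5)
Your proof takes essentially the same route as the paper's: iterate over all text positions, compute the frequency of the bigram starting there with \cref{lemBroadwordBigram} in \Oh{n \lg \lg \lg n / \log_\sigma n} time, and maintain the running top-$d$ in a frequency-ordered table with \Oh{\lg d} insertions, giving \Oh{n^2 \lg \lg \lg n / \log_\sigma n} time overall. Your only substantive addition is the prefix-restricted first-occurrence test, which the paper omits (it recomputes the frequency at every position and tacitly assumes a duplicate bigram is not re-inserted into~$F$); your filter handles duplicates cleanly within the time budget, but note two small points: the test should examine occurrences inside $T[1..p]$ (those \emph{starting} before~$p$) rather than $T[1..p-1]$, since an occurrence starting at $p-1$ ends at position~$p$ --- otherwise, inside a run such as $\texttt{aaa}\cdots$, position~$p$ is wrongly declared a first occurrence and the bigram is inserted twice --- and since the test only needs existence rather than a non-overlapping count, the run-handling machinery of \cref{lemBroadwordBigram} is unnecessary there, as a plain match bitmask per chunk suffices.
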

\begin{proof}
  We allocate a frequency table~$F$ of length~$d$.
  For each text position~$i$ with $1 \le i \le n-1$, we compute the frequency of $T[i]T[i+1]$ in $\Oh{n \lg \lg \lg n / \log_\sigma n}$ time with \cref{lemBroadwordBigram}.
  After computing a frequency, we insert it into~$F$ if it is one of the $d$ most frequent bigrams among the bigrams we have already computed.
  We can perform the insertion in \Oh{\lg d} time if we sort the entries of~$F$ by their frequencies.
\end{proof}

Studying the final time bounds of \cref{eqTotalTime} for the sequential algorithm of \cref{secSequential},
we see that we spend \Oh{n^2} time in the first turn, but spend less time in later turns.
Hence, we want to run the bit-parallel algorithm only in the first few turns until $f_k$ becomes so
large that the benefits of running \cref{lemBatchedCount} outweigh the benefits of the bit-parallel approach of \cref{lemWordPackedCount}.
In detail, for the $k$-th round, we set $d := f_k$ and
run the algorithm of \cref{lemWordPackedCount} on the current text if $d$ is sufficiently small, or otherwise the algorithm of \cref{lemBatchedCount}.
In total, we yield
\begin{align}\label{eqTotalTimeWordPacked}
  \begin{split}
    & {\OhS{\sum_{k=0}^{\Oh{\lg n}} \min\left( \frac{n-f_k}{f_k} n \lg f_k, \frac{(n-f_k)^2 \lg \lg \lg n}{\log_\tau n}}\right) }  \\
    &= {\OhS{n^2 \sum_{k=0}^{\lg n} \min\left(\frac{k}{\gamma^k}, \frac{\lg \lg \lg n}{\log_\tau n}}\right) } \\
    &= {\OhS{\frac{n^2 \lg \log_\tau n \lg \lg \lg n}{\log_\tau n}}} \textup{~time in total,}
  \end{split}
\end{align}
where $\tau = \sigma_m$ is the number of terminals and non-terminals, and
$k / \gamma^k > \lg \lg \lg n/\log_\tau n \Leftrightarrow k = \Oh{\lg (\lg n / (\lg \tau \lg \lg \lg n))}$.

To obtain the claim of \cref{thmWordPack},
it is left to show that the $k$-th round with the bit-parallel approach uses \Oh{n^2 \lg \lg \lg n / \log_\tau n} time, 
as we now want to charge each text position with $\Oh{n/\log_\tau n}$ time with the same amortized analysis as after \cref{eqTotalTime}.
We target \Oh{n / \log_\tau n} time for
\begin{enumerate}[label=(\arabic*)]
  \item replacing all occurrences of a bigram, \label{itBroadWordExchange}
  \item shifting freed up text space to the right,\label{itBroadWordShifting}
  \item finding the bigram with the highest or lowest frequency in~$F$,  \label{itMinMaxFreq}
  \item updating or exchanging an entry in~$F$, and \label{itDynamic}
  \item looking up the frequency of a bigram in~$F$.\label{itFindFreq}
\end{enumerate}
\JO[\Cref{itBroadWordExchange,itBroadWordShifting} can be solved by applying elementary bit-parallel techniques.\footnote{See the full paper at \url{https://arxiv.org/abs/1908.04933} for details.}]{%
Let $x := \upgauss{\lg \sigma_{i+1}}$ and $q$ be the largest multiple of~$x$ fitting into a computer word, divided by~$x$.
For~\cref{itBroadWordExchange}, we partition~$T$ into substrings of length~$q$, and apply \cref{itBroadWordExchange} to each such substring~$S$.
Here, we combine the two bit vectors of \cref{figBroadwordBigram} used for the two 
$\fnPopcount$ calls by a bitwise OR, and call the resulting bit vector~$Y$.
Interpreting~$Y$ as an array of integers of bit width~$x$, $Y$ has $q$ entries, and it holds that
$Y[i] = 2^x-1$ if and only if $S[i]$ is the second character of an occurrence of the bigram we want to replace%
\footnote{like in \cref{itBroadWordExchange}, the case that the bigram crosses a boundary of the partition of~$T$ is handled individually}.
We can replace this character in all marked positions in~$S$ by a non-terminal~$X_{i+1}$ using $x$~bits with the instruction
$(S \fnAND \neg Y) \fnOR ((Y \fnAND L^q) \cdot X_{i+1})$, where $L$ with $|L| = x$ is the bit vector having marked only the least significant bit.
Subsequently, for~\cref{itBroadWordShifting}, we erase all characters $S[i]$ with $Y[i+1] = (Y \ll x)[i] = 2^x-1$ and move them to the right of the bit chunk~$S$ sequentially.
In the subsequent bit chunks, we can use word-packed shifting. 
The sequential bit shift costs $\Oh{|S|} = \Oh{\log_{\sigma_{i+1}} n}$ time, but on an amortized view, a deletion of a character is done at most once per original text position.
}%

For the remaining points,
our trick is to represent $F$ by a minimum and a maximum heap, both realized as array heaps.
For the space increase, we have to lower $\gamma$ adequately.
Each element of an array heap stores a frequency and a pointer to a bigram stored in a separate array~$B$ storing all bigrams consecutively. 
A pointer array~$P$ stores pointers to the respective frequencies in both heaps for each bigram of~$B$.
The total data structure can be constructed at the beginning of the $k$-th round in \Oh{f_k} time, and hence does not worsen the time bounds.
While $B$ solves~\cref{itFindFreq},
the two heaps with~$P$ solve~\cref{itMinMaxFreq,itDynamic} even in \Oh{\lg f_k} time.

In case that we want to store the output in working space, we follow the description in the paragraph after \cref{thmGoal}, 
where we now use word-packing to find the second occurrence of a bigram in $T_i$ in \Oh{n / \log_{\sigma_i} n} time.

\J{%
  \JO[\clearpage{}]{}
\section{Computing MR-Re-Pair in Small Space}\label{secMaxRepeat}
We can adapt our algorithm to compute the MR-Re-Pair grammar scheme proposed by \citet{furuya19repair}.
The difference to Re-Pair is that MR-Re-Pair replaces the most frequent maximal repeat instead of the most frequent bigram,
where a maximal repeat is a reoccurring substring of the text whose frequency\footnote{We naturally extend the definition of \emph{frequency} from bigrams to substrings meaning the number of non-overlapping occurrences.} decreases when extending it to the left or to the right.
Our idea is to exploit the fact that a most frequent bigram corresponds to a most frequent maximal repeat~\cite[Lemma 2]{furuya19repair}.
This means that we can find a most frequent maximal repeat by extending all occurrences of a most frequent bigram to their left and to their right until all are no longer equal substrings.
Although such an extension can be time consuming, this time is amortized by the number of characters that are replaced on creating an MR-Re-Pair rule.
Hence, we conclude that we can compute MR-Re-Pair in the same space and time bounds as our algorithm computing the Re-Pair grammar.

\section{Parallel Algorithm}\label{secParallel}
Suppose that we have $p$ processors on a CRCW machine, 
supporting in particular parallel insertions of elements and frequency updates in a frequency table.
In the parallel setting, we allow us to spend $\Oh{p \lg n}$ bits of additional working space such that each processor
has a extra budget of \Oh{\lg n} bits.
In our computational model, we assume that the text is stored in $p$ parts of equal lengths\footnote{We pad up the last part with dummy characters to match $n/p$ characters.} such that 
we can enlarge a text using $n \lg \sigma$ to $n (\lg \sigma + 1)$ bits in $\max(1, n/p)$ time without extra memory.
For our parallel variant computing Re-Pair,
our working horse is a parallel sorting algorithm:
\begin{lemma}[\cite{batcher68bitonic}]\label{lemSortPar}
  We can sort an array of length~$n$ in $\Oh{\max(n/p,1) \lg^2 n}$ parallel time with \Oh{p \lg n} bits of working space.
  The work is \Oh{n \lg^2 n}.
\end{lemma}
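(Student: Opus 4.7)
The plan is to implement Batcher's bitonic sorting network in place. I would first recall the definition of a bitonic sequence (a concatenation of a monotonically increasing and a monotonically decreasing sequence, possibly cyclically rotated), and the bitonic merger: given a bitonic input of length~$n$, a single parallel layer of $n/2$ compare-and-swap operations between positions~$i$ and $i+n/2$ splits the input into two bitonic halves such that every element of the lower half is at most every element of the upper half. Iterating this splitting for $\lg n$ layers sorts the bitonic input. Bitonic sort itself is then defined recursively: sort the first half increasingly and the second half decreasingly (producing a bitonic sequence of length~$n$), then apply the bitonic merger. The recurrence for the depth is $T(n) = T(n/2) + \lg n$, which resolves to $\Oh{\lg^2 n}$ layers.

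For the parallel time bound, I would observe that every layer consists of at most $n/2$ mutually independent compare-and-swap operations that act in place on the input array. With $p$ processors on a CRCW machine, I would assign these operations in batches of $\upgauss{n/(2p)}$ per processor, so that each layer completes in $\Oh{\max(n/p,1)}$ parallel time. Multiplying by the $\Oh{\lg^2 n}$ depth yields the claimed parallel time, while the total number of compare-and-swaps is $\Oh{n \lg^2 n}$, which is the work.

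For the space bound, note that the sort operates in place on the given array: the only per-processor state is a constant number of indices (identifying the processor's current pair and the current stage of the network), each fitting in \Oh{\lg n}~bits, for a total of \Oh{p \lg n}~bits of working memory on top of the array itself. No auxiliary element storage is needed.

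The main obstacle is the correctness of the bitonic merger. I would prove it via the 0-1 principle, which states that a comparator network sorts every input if and only if it sorts every 0-1 input. On a bitonic 0-1 sequence, the split step produces two bitonic halves with the required dominance property by a short case analysis on where the transitions between runs of 0's and 1's occur; induction on $\lg n$ then completes the merger correctness, and the rest of the sorting network follows from standard divide-and-conquer.
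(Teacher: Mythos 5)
Your proposal is correct and reconstructs exactly the result the paper imports by citation to Batcher's bitonic sorting networks: the paper gives no proof of its own, relying on the $\Oh{\lg^2 n}$-depth, in-place comparator network with $n/2$ independent compare-and-swaps per layer scheduled over $p$ processors, which is precisely what you describe. The only cosmetic remark is that bitonic sort needs no concurrent reads or writes (EREW suffices) and, for $n$ not a power of two, a standard padding or Batcher's odd--even variant handles the general case.
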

The parallel sorting allows us to state \cref{lemBatchedCount} in the following way:
\begin{lemma}\label{lemBatchedCountPar}
   Given an integer~$d$ with $d \ge 1$,
   we can compute the frequencies of the $d$ most frequent bigrams in a text of length $n$ whose characters are drawn from an alphabet of size~$\sigma$ 
   in \Oh{\max(n,d) \max(n/p,1)  \lg^2 d / d} time using $2d$ $\upgauss{\lg(\sigma^2 n/2)} + \Oh{p \lg n}$ bits.
   The work is \Oh{\max(n,d) n \lg^2 d / d}.
\end{lemma}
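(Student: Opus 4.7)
The plan is to follow the same skeleton as the proof of \cref{lemBatchedCount}, replacing each sequential primitive with its parallel counterpart. First, partition $T$ into $\upgauss{n/d}$ overlapping substrings $S_1,\dots,S_{\upgauss{n/d}}$ of length~$d$ exactly as before, and allocate the two frequency tables $F$ and $F'$, each of capacity~$d$. Each processor will use its personal $\Oh{\lg n}$-bit budget as a scratch register for pointers and counters, which fits in the stated $\Oh{p \lg n}$ bits of extra working space.

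For each substring~$S_j$, I would perform three parallel steps in sequence. Step one: extract the (at most) $d$ bigrams of $S_j$ into $F'$ and sort them lexicographically using \cref{lemSortPar}, taking $\Oh{\max(d/p,1)\lg^2 d}$ time. Step two: count occurrences of these bigrams in the \emph{whole} text~$T$ by distributing the $n$ text positions evenly among the $p$ processors; each processor, on each of its positions, does a binary search of the local bigram against the sorted $F'$ in $\Oh{\lg d}$ time and performs an atomic increment on the corresponding counter (exploiting the CRCW model for concurrent writes to the same counter). This runs in $\Oh{\max(n/p,1)\lg d}$ time. Step three: interpret $F\cup F'$ as a single array of length $2d$, sort by frequency with \cref{lemSortPar} in $\Oh{\max(d/p,1)\lg^2 d}$ time, discard duplicates, keep the $d$ most frequent, and clear $F'$.

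Summing the costs over the $\Oh{n/d}$ substrings, each round spends $\Oh{\max(d/p,1)\lg^2 d + \max(n/p,1)\lg d}$ parallel time; since $d \le n$ in the interesting regime, the scanning term dominates up to the extra $\lg d$ factor coming from the sort, so total parallel time is $\Oh{(n/d)\cdot \max(n/p,1)\lg^2 d} = \Oh{\max(n,d)\max(n/p,1)\lg^2 d / d}$. For $d \ge n$, one pass through the text with a single frequency table of size $d$ followed by one parallel sort gives the same bound. The total work is $p$ times these costs, yielding $\Oh{\max(n,d) n \lg^2 d / d}$. Space-wise, $F$ and $F'$ together occupy $2d\upgauss{\lg(\sigma^2 n/2)}$ bits, and \cref{lemSortPar} together with the per-processor scratch registers contribute the $\Oh{p\lg n}$ bits.

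The main obstacle I anticipate is the concurrent-counter update during Step two: naively having $p$ processors hammer the same bigram counter could serialize, so one must argue either (a) that the CRCW model we assumed permits $\Oh{1}$-time atomic increments, or (b) fall back to a tournament/fold reduction where each processor keeps $\Oh{d/p}$ local tallies and a parallel reduction merges them, which still fits in the stated time and work but must be accounted carefully against the $\Oh{p\lg n}$ space budget. A secondary, purely bookkeeping concern is the last (possibly short) substring and the overlap at substring borders, which is handled exactly as in the sequential proof since the parallel primitives do not see this detail.
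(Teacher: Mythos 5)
Your proposal is correct and follows essentially the same route as the paper: it reuses the partition-and-merge skeleton of \cref{lemBatchedCount}, parallelizes the scans across the $p$ processors, keeps the per-position binary search sequential, and substitutes \cref{lemSortPar} for the in-place sort, with the concurrent frequency updates covered by the CRCW assumption the paper makes explicitly at the start of its parallel section. The only (harmless) divergences are cosmetic: you bound the sorts by $\lg^2 d$ where the paper's table writes $\lg^2 n$, and you spell out the concurrent-increment concern that the paper folds into its machine model.
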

\begin{proof}
  We follow the computational steps of \cref{lemBatchedCount}, 
  but (a) divide a scan into $p$ parts, 
  (b) conduct a scan in parallel but a binary search sequentially,
  and (c) use \cref{lemSortPar} for the sorting.
  This gives us the following time bounds for each operation:

  {\renewcommand{\arraystretch}{1.5}
    \setlength{\tabcolsep}{0.5em}
  \begin{tabular}{lll} %
  \toprule
  Operation & \cref{lemBatchedCount} & Parallel \\
  \midrule
  fill $F'$ with bigrams &  ${\Oh{d}}$ & ${\Oh{\max(1,d/p)}}$ \\
  sort $F'$ lexicographically & ${\Oh{d \lg d}}$ & ${\Oh{\max(d/p,1) \lg^2 n}}$ \\
  compute frequencies of $F'$ & ${\Oh{n \lg d}}$ & ${\Oh{n/p \lg d}}$ \\
  merge $F'$ with $F$ & ${\Oh{d \lg d}}$ & ${\Oh{\max(d/p,1) \lg^2 n}}$ \\
  \bottomrule
      \end{tabular}
    }

    The $\Oh{n/d}$ merge steps are conducted in the same way, yielding the bounds of this lemma.
\end{proof}
In our sequential model, we produce $T_{i+1}$ by performing a left shift after replacing all occurrences of a most frequent bigram with a new non-terminal~$X_{i+1}$ such that we gain free space at the end of the text.
As described in our computational model, our text is stored as a partition of $p$ substrings, each assigned to one processor. 
Instead of gathering the entire free space at $T$'s end, we gather free space at the end of each of these substrings.
We bookkeep the size and location of each such free space (there are at most $p$ many) such that we can work on the remaining text~$T_{i+1}$ like it would be a single continuous array (and not fragmented into $p$ substrings).
This shape allows us to perform the left shift in \Oh{n/p} time, while spending \Oh{p \lg n} bits of space for the locations of the free space fragments.

For $p \le n$, exchanging \cref{lemBatchedCount} with \cref{lemBatchedCountPar} in \cref{eqTotalTime} yields 
\[
  \OhS{\sum_{k=0}^{\Oh{\lg n}} \frac{n-f_k}{f_k} \frac{n}{p} \lg^2 f_k} = \OhS{\frac{n^2}{p} \sum_k^{\lg n} \frac{k^2}{\gamma^k}} = \OhS{\frac{n^2}{p}} \textup{~time in total.}
\]
It is left to provide an amortized analysis for updating the frequencies in $F$ during the $i$-th turn.
Here, we can charge each text position with \Oh{n/p} time, as we have the following time bounds for each operation:

  {\renewcommand{\arraystretch}{1.5}
    \setlength{\tabcolsep}{0.5em}
  \begin{tabular}{lll} %
  \toprule
  Operation & Sequential & Parallel \\
  \midrule
  linearly scan $F$ &  ${\Oh{f_k}}$ & ${\Oh{f_k/p}}$ \\
  linearly scan $T_{i}$ & ${\Oh{n_i}}$ & ${\Oh{n_i/p}}$ \\
  sort $D$ with $h = |D|$ & ${\Oh{h \lg h}}$ & ${\Oh{\max(1,h/p) \lg^2 h}}$  \\
  \bottomrule
      \end{tabular}
    }

The first operation in the above table is used, among others, for finding the bigram with the lowest or highest frequency in $F$.
Computing the lowest or highest frequency in $F$ can be done with a single variable pointing to the currently found entry with the lowest or highest frequency during a parallel scan thanks to the CRCW model.\footnote{In the CREW model, concurrent writes are not possible.
A common strategy lets each processor compute the entry of the lowest or highest frequency within its assigned range in $F$, which is then merged in a tournament tree fashion, causing $\Oh{\lg p}$ additional time.}

\begin{theorem}
  We can compute Re-Pair in \Oh{n^2/p} time with $p \le n$ processors on a CRCW machine 
  with $\max( (n/c) \lg n, n \upgauss{\lg \sigma_m}) + \Oh{p \lg n}$ bits of working space including the text space,
  where $c \ge 1$ is a fixed constant, and $\sigma_m$ is the number of terminal and non-terminal symbols.
  The work is \Oh{n^2}.
\end{theorem}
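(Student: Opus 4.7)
The plan is to mirror the structure of the sequential proof of \cref{thmGoal}, but replace every sequential subroutine by its parallel counterpart established earlier in this section. Concretely, I would follow the round/turn decomposition verbatim: at the beginning of each round~$k$ I (re)compute the frequency table~$F$ of size~$f_k$ using \cref{lemBatchedCountPar} in place of \cref{lemBatchedCount}, keep the same threshold~$t$ equal to the lowest frequency in~$F$, and process successive turns until~$F$ becomes empty. Since the growth rate~$\gamma$ arises purely from the amortized counting argument relating freed space to new entries in~$F$, it is untouched by parallelization (modulo a constant reduction of $\gamma$ to absorb the $\Oh{p \lg n}$-bit overhead), and we still terminate after $\Oh{\lg n}$ rounds.

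Next I would justify the \Oh{n^2/p} bound on the recomputations of~$F$. Plugging \cref{lemBatchedCountPar} with $d=f_k$ into the recurrence that produced \cref{eqTotalTime} gives, for $p \le n$, a total cost of $\Oh{\sum_{k=0}^{\Oh{\lg n}} \frac{n-f_k}{f_k}\cdot\frac{n}{p}\lg^2 f_k} = \Oh{(n^2/p)\sum_{k} k^2/\gamma^k} = \Oh{n^2/p}$, exactly as in the displayed summation in the section. The work bound of $\Oh{n^2}$ follows from the work column of \cref{lemBatchedCountPar} by the same telescoping, now without the $1/p$ factor.

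For the per-turn updates of~$F$, I would verify that each text position can still be charged $\Oh{n/p}$ time. The key observation, already made in the text preceding the theorem, is that the three recurring primitives---linear scans of~$F$, linear scans of~$T_i$, and sorting the auxiliary buffer~$D$---admit parallel implementations with the promised speed-ups (scan in $\Oh{\cdot/p}$ on CRCW, sort via \cref{lemSortPar}). Selecting the highest or lowest frequency in~$F$ is handled by a CRCW parallel scan writing concurrently to a single min/max register, so no tournament reduction overhead arises. Replacing occurrences of the selected bigram in the partitioned text and re-gathering the freed space can be done in $\Oh{n/p}$ time if we do not compact the text globally but rather keep a per-block free-space table of at most $p$ entries (costing $\Oh{p \lg n}$ bits), so that subsequent operations still view the text as a logically contiguous sequence.

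The step I expect to require the most care is the bookkeeping of the fragmented free space across the $p$ blocks: every turn must enlarge the text alphabet from $\lg\sigma_i$ to $\lg\sigma_{i+1}$ bits per character, shift characters left within each block after bigram replacement, and make the space newly available to~$F$ without a global compaction (which would cost $\Oh{n/p}$ per turn and thus $\Oh{mn/p}$ total, too much). I would therefore maintain an explicit list of the $p$ per-block free regions and route insertions into~$F$ to the appropriate region, updating the list in $\Oh{1}$ time per replaced occurrence on a CRCW machine. Once this bookkeeping is in place, combining the $\Oh{n^2/p}$ cost of recomputing~$F$ with the amortized $\Oh{n/p}$ charge per text position for updating~$F$ and performing replacements yields the stated $\Oh{n^2/p}$ parallel time and $\Oh{n^2}$ work inside $\max((n/c)\lg n, n\upgauss{\lg \sigma_m}) + \Oh{p \lg n}$ bits.
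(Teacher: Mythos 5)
Your proposal matches the paper's own proof essentially step for step: it substitutes \cref{lemBatchedCountPar} into the round/turn recurrence to get $\Oh{n^2/p}$ time and $\Oh{n^2}$ work, uses CRCW concurrent writes to a single register for min/max selection in~$F$, parallelizes the scans of~$F$ and~$T_i$ and the sort of~$D$ via \cref{lemSortPar}, and keeps the freed space fragmented per block with an $\Oh{p\lg n}$-bit bookkeeping table rather than compacting globally. No substantive differences; the argument is correct as written.
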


\section{Computing Re-Pair in External Memory}\label{secEM}
The last part of this article is devoted to the first external memory (EM) algorithm computing Re-Pair,
which is another way to overcome the memory limitation problem.
We start with the definition of the EM model, present an approach using a sophisticated heap data structure, 
and another approach adapting our in-place techniques.

For the following, we use the EM model of \citet{aggarwal88iomodel}.
It features fast internal memory~(IM) holding up to $M$ data words, and slow EM of unbounded size.
The measure of the performance of an algorithm is the number of input and output operations (I/Os) required, 
where each I/O transfers a block of $B$ consecutive words between memory levels.
Reading or writing $n$ contiguous words from or to disk requires $\scan(n) = \Ot{n/B}$~I/Os.
Sorting $n$ contiguous words requires $\sort(n)=\Oh{(n/B) \cdot \log_{M/B}(n/B)}$~I/Os.
For realistic values of $n$, $B$, and $M$, we stipulate that $\scan(n) < \sort(n) \ll n$.

A simple approach is based on an EM heap maintaining the frequencies of all bigrams in the text.
A state-of-the-art heap is due to \citet{jiang19heap} providing insertion, deletion, and the retrieval of the maximum element in \Oh{B^{-1} \log_{M/B} (N/B)} I/Os, where $N$ is the size of the heap.
Since $N \le n$, inserting all bigrams takes at most $\sort(n)$ I/Os.
As there are at most $n$ additional insertions, deletions and maximum element retrievals, this sums to at most $4 \sort(n)$ I/Os.
Finally, we need to scan the text $m$ times to replace the occurrences of the retrieved bigram, triggering $m \sum_{i=1}^m \scan(\abs{T_i}) \le m \scan(n)$ I/Os.
In the following, we show an EM Re-Pair algorithm that evades the use of complicated data structures and prioritizes scans over sorting.

This algorithm is based on our Re-Pair algorithm.
It uses \cref{lemBatchedCount} with $d := \Ot{M}$ such that $F$ and $F'$ can be kept in IM\@.
This allows us to perform all sorting steps and binary searches in IM without additional I/O\@.
We only trigger I/O operations for scanning the text, which is done $\upgauss{n/d}$ times, since we partition $T$ into $d$ substrings.
In total, we spend at most $mn/M$ scans for the algorithm of \cref{lemBatchedCount}.
For the actual algorithm, an update of $F$ is done $m$ times, during which we replace all occurrences of a chosen bigram in the text.
This gives us $m$ scans in total. 
Finally, we need to reason about $D$, which is also created $m$ times.
However, $D$ may be larger than $M$, such that we may need to store it in EM\@.
Given that $D_i$ is $D$ in the $i$-th turn, we sort $D$ in EM, triggering $\sort(D_i)$ I/Os.
With a converse of Jensen's inequality~\cite[Theorem B]{simic09jensen} (set there $f(x) := n \lg n$) we obtain
$\sum_{i=1}^m \sort(|D_i|) \le \sort(n) + \Oh{n \log_{M/B} 2}$  total I/Os for all instances of $D$.
We finally yield:

\begin{theorem}
  We can compute Re-Pair with $\min(4\sort(n), (mn/M) \scan(n) + \sort(n) + \Oh{n \log_{M/B} 2} ) + m \scan(n)$~I/Os in external memory.
\end{theorem}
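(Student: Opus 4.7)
The plan is to design two distinct EM algorithms, one for each term inside the $\min$, and combine them by running whichever is cheaper for the given parameters. The additive $m \scan(n)$ term arises in both approaches from the $m$ text rewrites a Re-Pair computation inevitably performs.

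For the first bound, I would instantiate a priority queue using the EM heap of \citet{jiang19heap}. Populating it with one entry per distinct bigram together with its frequency can be done by sorting the stream of all $n-1$ bigrams of $T$, costing $\sort(n)$ I/Os. Each of the $m$ turns then extracts the current maximum, replaces its occurrences in the text (the source of the $m \scan(n)$ term), and updates the frequencies of the $\Oh{1}$ bigrams adjacent to each replacement site. Since the total number of such updates across all turns is $\Oh{n}$, and since $N \le n$ throughout, the amortized cost of all heap operations is bounded by $4 \sort(n)$, giving the first term inside the $\min$.

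For the second bound, I would invoke \cref{lemBatchedCount} with $d = \Theta(M)$ so that $F$ and $F'$ both reside in internal memory; all sorting, binary searches, and merges inside the lemma then happen in IM without triggering I/O. The only I/O charge per invocation is the $\lceil n/d\rceil$ passes over the text, yielding $(n/M)\scan(n)$ per turn and $(mn/M)\scan(n)$ in total. Updating $F$ after each replacement is absorbed into the shared $m\scan(n)$ term. The delicate object is the buffer $D$ of adjacent characters: if $|D_i| > M$ in turn $i$, we must sort it externally at cost $\sort(|D_i|)$, and we need to bound $\sum_{i=1}^{m} \sort(|D_i|)$.

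The main obstacle is precisely this last sum. A direct application of Jensen's inequality to the essentially concave $\sort$ function only yields the weak $m \cdot \sort(n/m)$ upper bound. Instead, the plan is to appeal to the converse Jensen inequality of \citet{simic09jensen} with $f(x) := x \lg x$, using the constraint $\sum_i |D_i| \le n$ that follows because each original text position contributes an amortized $\Oh{1}$ entries to the $D_i$'s across all turns. This yields $\sum_i \sort(|D_i|) \le \sort(n) + \Oh{n \log_{M/B} 2}$, matching the second term in the $\min$. Selecting whichever of the two approaches is cheaper for the given $n$, $m$, $M$, $B$ then produces the advertised I/O complexity.
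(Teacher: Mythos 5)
Your proposal is correct and follows essentially the same route as the paper: the heap-based bound via \citet{jiang19heap} with $\sort(n)$ for the initial build plus $\Oh{n}$ further operations totalling $4\sort(n)$, versus the batched-counting bound obtained by running \cref{lemBatchedCount} with $d = \Theta(M)$ so that $F$ and $F'$ stay in internal memory, with the sum $\sum_i \sort(|D_i|)$ controlled by the same converse Jensen inequality of \citet{simic09jensen}. The only (cosmetic) difference is that you set $f(x) := x \lg x$ where the paper writes $f(x) := n \lg n$, which reads as a typo on the paper's side; your formulation is the intended one.
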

Our approach can be practically favorable to the heap based approach if $m = \oh{\lg n}$ and $mn/M = \oh{\lg n}$,
or if the EM space is also of major concern.
}%

\section{Heuristics for Practicality}\label{secHeuristics} 
The achieved \Oh{n^2} time bound seems to convey the impression that this work is only of purely theoretic interest.
However, we provide here some heuristics, which can help us to overcome the practical bottleneck at the beginning 
of the execution, where only \Oh{\lg n} of bits of working space are available.
In other words, we want to study several heuristics to circumvent the need to 
call \cref{lemBatchedCount} with a small parameter~$d$, as such a case means a considerable time loss.
Even a single call of \cref{lemBatchedCount} with a small~$d$ prevents the computation of Re-Pair of data sets larger than 1 MiB within a reasonable time frame (cf.~\cref{secImplementation}).
We present three heuristics depending on whether our space budget on top of the text space is within
\begin{enumerate}
  \item $\sigma_i^2 \lg n$ bits, \label{itHeurA}
  \item $n_i \lg(\sigma_{i+1} + n_i)$ bits, or \label{itHeurB}
  \item \Oh{\lg n} bits. \label{itHeurC}
\end{enumerate}

\block{Heuristic~\ref{itHeurA}}
If $\sigma_i$ is small enough such that we can spend $\sigma_i^2 \lg n$ bits, 
then we can compute the frequencies of all bigrams in \Oh{n} time. 
Whenever we reach a $\sigma_j$ that lets $\sigma_j \lg n$ grow outside of our budget, 
we have spent \Oh{n} time in total for reaching $T_j$ from $T_i$ as the costs for replacements can be amortized by twice of the text length.

\block{Heuristic~\ref{itHeurB}}
Suppose that we are allowed to use $(n_i-1) \lg (n_i/2) = (n_i-1) \lg n_i - n_i + \Oh{\lg n_i}$ bits additionally to the $n_i \lg \sigma_i$ bits of the text~$T_i$.
We create an extra array $F$ of length $n_i-1$ with the aim that $F[j]$ stores the frequency of $T[j]T[j+1]$ in $T[1..j]$.
We can fill the array in $\sigma_i$ scans over $T_i$, costing us $\Oh{n_i \sigma_i}$ time.
The largest number stored in $F$ is the most frequent bigram in $T$.

\block{Heuristic~\ref{itHeurC}}
Finally, if the distribution of bigrams is skewed, chances are that one bigram outnumbers all others.
In such a case we can use the following algorithm to find this bigram:

\begin{lemma}
   Given there is a bigram in $T_i$ ($0 \le i \le n$) whose frequency is higher than the sum of frequencies of all other bigrams,
we can compute $T_{i+1}$ in \Oh{n} time using \Oh{\lg n} bits.
\end{lemma}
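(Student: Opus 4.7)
The plan is to adapt the Boyer--Moore majority vote to our non-overlapping setting. Denote the sought bigram by $B^*$, whose non-overlapping frequency satisfies $f(B^*) > \sum_{B \neq B^*} f(B)$. For any bigram $B$, let $g(B) := |\{j : T_i[j]T_i[j+1] = B\}|$ count all positional occurrences (with overlaps), so that $\sum_B g(B) = n_i - 1$. The hypothesis is stated in terms of $f$, but the cheap streaming algorithms natively count $g$; the bridge between them is the elementary inequality $g(B) \le 2 f(B)$ holding for every $B$: it is an equality when the two characters of $B$ differ, and the factor of $2$ comes from the overlap inside each maximal run of a repeated character (a run of length $k$ contributes $k-1$ to $g(cc)$ and $\lfloor k/2 \rfloor$ to $f(cc)$).

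Combining the bridge with the hypothesis gives
\[
\sum_{B \neq B^*} g(B) \;\le\; 2 \sum_{B \neq B^*} f(B) \;<\; 2 f(B^*) \;\le\; 2 g(B^*),
\]
so $g(B^*) > (n_i - 1)/3$. The first step, then, is a single left-to-right pass running the Misra--Gries heavy-hitter algorithm with two counters on the stream of bigrams $T_i[j]T_i[j+1]$ for $j = 1,\ldots,n_i-1$: with two counters the algorithm retains every item whose count exceeds a third of the stream length, so $B^*$ is among the (at most) two candidates returned. This pass costs $O(n)$ time and $O(\lg n)$ bits for the two counters together with their associated bigrams. The second step verifies each candidate with another linear scan that computes its non-overlapping frequency via greedy left-to-right matching, and keeps whichever frequency is larger; by hypothesis this is $B^*$. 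The third step compacts the text in place with a single scan using a read pointer $r$ and a write pointer $w$: whenever $T_i[r]T_i[r+1] = B^*$, write the new non-terminal $X_{i+1}$ at position $w$ and advance $r$ by $2$; otherwise copy $T_i[r]$ to position $w$ and advance $r$ by $1$. The output is $T_{i+1}$.

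The main obstacle is the first step, namely justifying that a streaming majority-vote routine (which unavoidably counts overlapping occurrences) can be trusted to find a bigram whose guarantee is phrased in terms of non-overlapping occurrences. Once the inequality $g(B) \le 2 f(B)$ is noted, the two-counter Misra--Gries instance absorbs the constant-factor slack, and the remaining two passes are routine in-place scans. All three scans are linear and use only a constant number of variables of $O(\lg n)$ bits each, matching the claimed bounds.
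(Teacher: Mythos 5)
Your proof is correct and, while it ends up at the same streaming majority-vote idea, it takes a genuinely more careful route than the paper, whose entire proof is the single sentence that the Boyer--Moore majority vote finds the most frequent bigram in $O(n)$ time with $O(\lg n)$ bits. The gap you identify in that one-liner is real: Boyer--Moore run on the stream of the $n_i-1$ positional bigrams is only guaranteed to output $B^*$ if $g(B^*)>(n_i-1)/2$, whereas the hypothesis is phrased in terms of the non-overlapping frequency $f$. The hypothesis does in fact imply this strict-majority condition, so the paper's appeal to Boyer--Moore can be rescued, but establishing that requires a nontrivial case analysis over maximal runs and run junctions (for $B^*=\texttt{ab}$ with distinct characters, dominance essentially forces $T_i=(\texttt{ab})^k$), none of which the paper supplies. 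Your bridge $g(B)\le 2f(B)$ replaces that delicate argument with an elementary one, at the cost of weakening the guarantee to $g(B^*)>(n_i-1)/3$; upgrading from one Boyer--Moore counter to two Misra--Gries counters and adding a verification pass over the at most two surviving candidates absorbs exactly this loss. You also make explicit the greedy two-pointer replacement that actually produces $T_{i+1}$, which the lemma claims but the paper's proof never addresses. All passes are linear with a constant number of $O(\lg n)$-bit variables, so the stated bounds hold. One wording nitpick: in your bridge sentence, ``it is an equality when the two characters of $B$ differ'' reads as asserting $g(B)=2f(B)$ in that case, whereas what holds (and what you use) is $g(B)=f(B)$ there, with the factor $2$ attained only by odd-length runs of a repeated character.
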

\begin{proof}
   We use the Boyer-Moore majority vote algorithm~\cite{moore91majority} for finding the most frequent bigram in \Oh{n} time with \Oh{\lg n} bits of working space.
\end{proof}

\JO{\subsection*{Acknowledgments}\Acknowledgments{}}

\bibliographystyle{abbrvnat}
\bibliography{literature,references}

\clearpage
\Jvar{}

\end{document}